\documentclass[reprint,amsmath,amssymb,aps,prx,floatfix]{revtex4-2}
\usepackage{graphicx}
\usepackage{dcolumn}
\usepackage{bm}
\usepackage{booktabs} 
\usepackage{braket}
\usepackage{multirow}
\usepackage{upgreek}
\usepackage{makecell}
\allowdisplaybreaks
\usepackage{xcolor}
\usepackage{algorithm}
\usepackage{algpseudocode}
\makeatletter
\expandafter\edef\csname ftype@algorithm\endcsname{\the\c@float@type}
\makeatother
\usepackage{amsthm}
\newtheorem{theorem}{Theorem}
\newtheorem{lemma}{Lemma}
\newtheorem{proposition}{Proposition}
\newtheorem{corollary}{Corollary}

\usepackage{hyperref}
\usepackage{orcidlink}
\hypersetup{colorlinks,linkcolor=blue,anchorcolor=blue,citecolor=blue,urlcolor=blue}
\DeclareMathOperator{\rank}{rank}
\DeclareMathOperator{\supp}{supp}
\newcommand{\R}{\mathbb{R}}
\newcommand{\Sp}{\mathrm{Sp}}
\newcommand{\Om}{\Omega}
\newcommand{\nmem}{n_{\mathrm{mem}}}

\algrenewcommand\algorithmiccomment[1]{\hfill\(\triangleright\)\ \textit{#1}}

\begin{document}

\title{\textbf{Memory-Optimal Sequential Synthesis of Multimode Gaussian Transformations}}%

\author{Fucheng Guo\,\orcidlink{0009-0003-0631-7404}}
\email{fguo22@ncsu.edu}
\affiliation{Department of Computer Science, North Carolina State University, Raleigh, North Carolina 27695, USA}%

\author{Frank Mueller\,\orcidlink{0000-0002-0258-0294}}
\email{fmuelle@ncsu.edu}
\affiliation{Department of Computer Science, North Carolina State University, Raleigh, North Carolina 27695, USA}%

\author{Yuan Liu\,\orcidlink{0000-0003-1468-942X}}
\email{q\_yuanliu@ncsu.edu}
\affiliation{Department of Electrical and Computer Engineering, North Carolina State University, Raleigh, North Carolina 27695, USA}%
\affiliation{Department of Computer Science, North Carolina State University, Raleigh, North Carolina 27695, USA}
\affiliation{Department of Physics and Astronomy, North Carolina State University, Raleigh, North Carolina 27695, USA}%

\date{\today}

\begin{abstract}
In modular quantum computing architectures, communication between hardware modules is mediated by traveling qumodes sent through transmission lines.
Each output qumode interacts with the emitting module only once through a
beam-splitter-type interaction and becomes inaccessible to that module after emission.
Information required for subsequent outputs must therefore remain in long-lived
memory qumodes. For a prescribed multimode Gaussian transformation on $N$
qumodes, this work determines the minimum memory cost for any given emission
order, constructs an explicit sequential protocol attaining this minimum, and
develops a greedy method for identifying memory-efficient emission orders. The
transformation is represented by a symplectic matrix $S$, specified either
directly or through a Gaussian gate sequence. The exact minimum memory cost is
obtained from the ranks of submatrices of $S$ and further reduces to a
support-based counting rule whose computational cost is linear in the size of the
support data. When $S$ is specified directly, a matrix-based protocol attains the minimum memory cost. If instead $S$ is specified through a gate sequence, the original gates can be reused without additional synthesis, although the resulting memory usage need not be minimal. Gaussian transformations with local support on a $D$-dimensional cubic lattice can be realized sequentially with $O(N^{(D-1)/D})$ memory qumodes. The protocols also apply to non-Gaussian inputs, including GKP
and cat states, and thereby provide an explicit, resource-efficient scheme for
intermodule communication in modular architectures for universal
continuous-variable quantum computation.
\end{abstract}

\maketitle

\section{Introduction}
\label{sec:intro}
Modular quantum architectures provide a scalable approach to building large quantum
systems from smaller, independently controlled modules
~\cite{Monroe2014,Nickerson2014}. Quantum states can be stored
and processed within individual modules and transmitted between them as traveling
qumodes through transmission lines~\cite{Kimble2008,Wehner2018,Reagor2016,Pfaff2017,Axline2018}.
In many applications, a module emits multiple traveling qumodes whose entanglement
structure is established by a multimode Gaussian transformation
~\cite{Menicucci2006,Menicucci2014,Yokoyama2013,Yoshikawa2016,
Asavanant2019,Larsen2019}. Here, Gaussian
refers to the operations used to establish entanglement among the qumodes, rather
than to the states carried by the individual qumodes locally. The input qumodes may therefore
carry arbitrary states, including non-Gaussian states such as
Gottesman--Kitaev--Preskill (GKP) states~\cite{GKP2001,Noh2020,Guo2026} and cat states
~\cite{Pfaff2017}. The resulting multimode
state may likewise be non-Gaussian, even though the entanglement among qumodes
is generated by Gaussian operations~\cite{Noh2020,Guo2026,Pfaff2017}.

Consider a general multimode Gaussian transformation acting on $N$ input qumodes.
In phase space, the transformation is represented by a symplectic matrix
$S\in\Sp(2N,\R)$ that maps the input quadratures to those of the $N$ output
qumodes~\cite{BraunsteinVanLoock2005,Weedbrook2012,4rf7-9tfx}. The transformation can be
decomposed into phase-space rotations, single-qumode squeezing operations, and
beam-splitter (BS) interactions
~\cite{Weedbrook2012,Chakhmakhchyan2018}. These operations can be implemented on stationary qumodes and controlled on
demand~\cite{Gao2018,Chapman2023}.

The output qumodes are emitted sequentially as traveling qumodes. Depending on the
physical platform, stored states may be released through tunable coupling to a
transmission line or through parametric mode conversion, while spontaneous and
drive-assisted emission are commonly used to generate traveling qumodes directly
~\cite{Yin2013,Pierre2014,Pfaff2017,Houck2007,Pechal2014}.
The present work considers a state-independent BS-type release, in
which a BS interaction transfers the state of a stationary qumode to a traveling qumode
through parametrically controlled mode conversion
~\cite{Pfaff2017,Axline2018}. After the transfer, the emitted
qumode is no longer accessible to the emitting module. The stationary qumodes
retained in the module and reused throughout the sequential implementation are
referred to here as memory qumodes. For a given multimode Gaussian transformation,
this loss of access raises three central questions: (1) How can an emission order
requiring few memory qumodes be found? (2) For a chosen emission order, what is the
minimum number of memory qumodes required? (3) How can a sequential protocol
achieving this minimum be constructed?

Previous studies of sequential generation have mainly focused on preparing a
prescribed target state~\cite{Schon2005,Schon2007}. Matrix-product-state (MPS) methods describe such processes
in terms of the dimension of an auxiliary system, with analogous formulations
available for Gaussian states~\cite{Schon2005,Schon2007,Adesso2006}. Related work has also studied the minimum number of
emitters required to generate photonic graph states
~\cite{LindnerRudolph2009,Li2022}. Large-scale optical experiments
have further demonstrated the sequential generation of specific one- and
two-dimensional continuous-variable cluster states
~\cite{Yokoyama2013,Yoshikawa2016,Asavanant2019,Larsen2019}, establishing the scalability
of this approach for particular families of target states.

The problem considered here is different. MPS-based methods may characterize the
sequential generation of an individual output state
~\cite{Schon2005,Schon2007,Adesso2006}, but they do not by themselves
characterize the sequential implementation of a prescribed multimode Gaussian
transformation for arbitrary input states. In the present setting, the relevant
resource is the number of reusable memory qumodes required to preserve the
information needed for outputs that have not yet been emitted. Existing
state-preparation results therefore do not determine the minimum memory cost of
implementing a general symplectic transformation, account for its dependence on the
emission order, identify an emission order with low memory cost, or construct a
sequential protocol guaranteed to attain the minimum
~\cite{Schon2005,Schon2007,Adesso2006,LindnerRudolph2009,Li2022,
Yokoyama2013,Yoshikawa2016,Asavanant2019,Larsen2019}.

This work contributes an analysis that establishes how to determine the minimum memory cost for any prescribed
emission order and constructs a sequential protocol that attains this minimum. It
also develops an efficient method for identifying emission orders with low memory
cost. These results are demonstrated through two applications: continuous-variable
quantum error-correcting encoders, which map logical information into multimode code
spaces that enable error detection and correction
~\cite{Braunstein1998,LloydSlotine1998,Noh2020,Guo2026}, and the sequential generation of
continuous-variable cluster states, which provide a key resource for
measurement-based quantum computation
~\cite{Menicucci2006,Menicucci2014}.

The main results of this work are as follows. First, for any
$S\in\Sp(2N,\R)$ and any prescribed emission order, the minimum number of memory
qumodes is given by one-half of the maximum rank of a corresponding family of
submatrices of $S$ (Sec.~\ref{sec:mincav}). This rank expression further reduces to
a simple counting rule based on the support pattern of $S$. Consequently, the
minimum memory cost can be evaluated in time linear in the size of the support
data, without explicit rank calculations. Second, two sequential synthesis
protocols are developed. The gate-sequence protocol presented in
Sec.~\ref{sec:algorithm} reuses the gates of a given Gaussian circuit without
additional synthesis, although its memory usage need not be minimal. The
matrix-based protocol in Appendix~\ref{app:matrix} constructs new in-register
Gaussian operations from the target symplectic matrix and always achieves the
minimum memory cost. Third, the memory cost depends strongly on the emission order.
A deterministic greedy heuristic is developed to identify memory-efficient
emission orders (Secs.~\ref{sec:chain} and~\ref{sec:order}). Finally, Gaussian transformations with local support on a $D$-dimensional cubic lattice can be realized sequentially with $O(N^{(D-1)/D})$ memory qumodes (Sec.~\ref{sec:scaling}).

\begin{figure*}[t]
  \centering
  \includegraphics[width=0.7\textwidth]{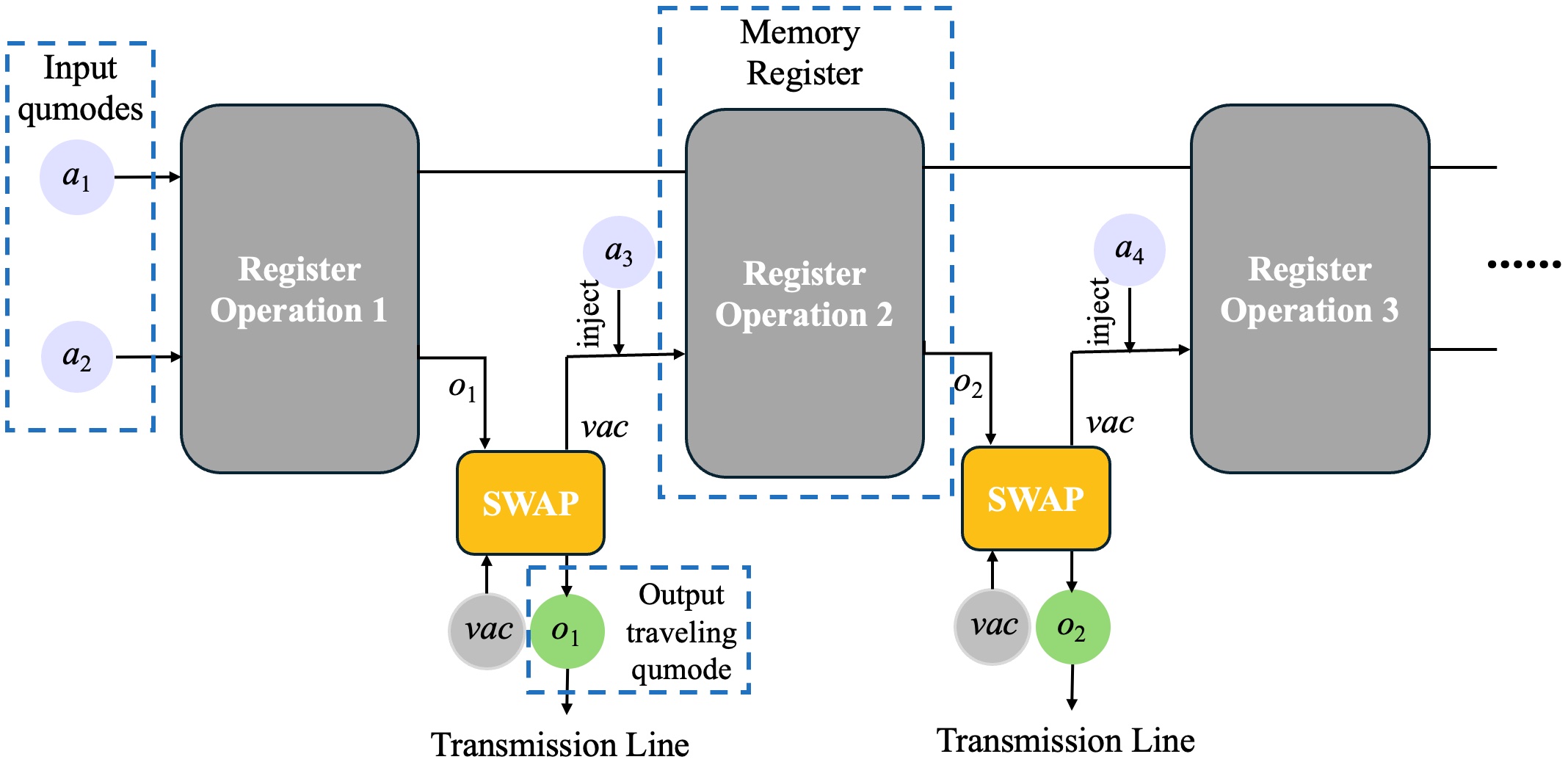}
 \caption{Sequential implementation of a multimode Gaussian transformation $S$
  using a reusable memory register. The memory register consists of stationary
  qumodes on which Gaussian operations can be performed. Time proceeds from
  left to right. The initial input qumodes are first loaded into the register.
  At each step, an in-register Gaussian operation maps the next output qumode
  $o_k$ onto one memory qumode while retaining the information required for later
  outputs in the remaining memory qumodes. The state of $o_k$ is then transferred
  to a vacuum traveling qumode through a beam-splitter full swap
  ($\theta=\pi$, labeled ``SWAP'') and released into a transmission line. The
  swap leaves the corresponding memory qumode in the vacuum state, which can then
  be replaced by a newly injected input qumode and reused in the next register
  operation. Repeating this procedure emits the outputs sequentially in the
  prescribed order $o_1,\ldots,o_N$. After emission, each traveling qumode is no
  longer accessible to the emitting module.}
  \label{fig:model}
\end{figure*}

\section{Model and Setup}
\label{sec:model}

\subsection{Gaussian Transformation and Sequential Architecture}

A Gaussian unitary acting on $N$ bosonic qumodes is represented in phase space by a
symplectic matrix $S\in\Sp(2N,\R)$~\cite{Noh2020,Menicucci2014}. Defining the
quadrature vector
\begin{equation}
\hat{\mathbf{r}}
=
(\hat q_1,\hat p_1,\dots,\hat q_N,\hat p_N)^{\top},
\label{eq:quadvector}
\end{equation}
with $[\hat q_j,\hat p_k]=i\,\delta_{jk}$ and
$[\hat q_j,\hat q_k]=[\hat p_j,\hat p_k]=0$, the transformation is
\begin{equation}
  \hat{\mathbf{r}} \mapsto S\,\hat{\mathbf{r}},
  \qquad
  S^{\top}\Om S=\Om,
  \qquad
  \Om=\bigoplus_{j=1}^{N}
  \begin{pmatrix}
    0 & 1\\
    -1 & 0
  \end{pmatrix}.
  \label{eq:symplectic}
\end{equation}
The target Gaussian transformation is specified by $S$. The input states are
determined by the chosen code and are not included in $S$.

To implement $S$, the $N$ output qumodes are generated sequentially
~\cite{Pichler2017}. They are produced using a small set of stationary, reusable
memory qumodes on which arbitrary Gaussian operations can be performed
~\cite{Heeres2017,Gao2018}. Each output is transferred to a separate traveling
qumode that is initially in the vacuum state and interacts once, through a full
swap, with the memory qumode carrying that output before leaving the module. As
shown schematically in Fig.~\ref{fig:model}, the initial input qumodes are first
loaded into the memory register. At each step, Gaussian operations within the
register map the next output $o_k$ onto one memory qumode while retaining the
information required for later outputs in the remaining memory qumodes. A
beam-splitter full swap then transfers $o_k$ to the vacuum traveling qumode and
leaves the corresponding memory qumode in the vacuum state. When additional inputs
are required, new input qumodes can be injected into available memory qumodes before
the next register operation. The number and timing of these injections may vary from
step to step according to the target transformation and emission order.

\subsection{Emission of Traveling Qumodes}

The release is implemented by a BS interaction. As a concrete realization, in circuit quantum electrodynamics (QED)~\cite{Reagor2016,Pfaff2017,Axline2018},
driving the Josephson nonlinearity at the frequency difference between a memory qumode
$a$ and a traveling qumode $b$ gives the rotating-frame Hamiltonian
\begin{equation}
  H(t)/\hbar = g(t)\,\hat a^{\dagger}\hat b
  + g^{*}(t)\,\hat a\,\hat b^{\dagger},
  \label{eq:Hbs}
\end{equation}
where $\hat a$ and $\hat b$ are the corresponding annihilation operators. The
complex coupling amplitude $g(t)$ is controlled by the strength and phase of the
parametric pump and can be switched on and off on demand
~\cite{Pfaff2017,Gao2018,Chapman2023}. Since the phase of $g(t)$ can be absorbed into
the definition of the traveling qumode, the corresponding beam-splitter gate may be
written as
\begin{equation}
\label{eq:Ubs}
\begin{aligned}
\hat U_{\mathrm{BS}}(\theta)
&=
\exp\!\left[
-i\frac{\theta}{2}
\left(
\hat a^\dagger \hat b
+
\hat a\hat b^\dagger
\right)
\right],\\
\theta
&=
2\int_0^{\tau} \lvert g(t)\rvert\,\mathrm{d}t ,
\end{aligned}
\end{equation}
where $\tau$ is the duration of the beam-splitter pulse.

In this work, each emission is required to transfer the state of the memory qumode
completely to the traveling qumode and leave the memory qumode in the incoming vacuum
state. This is achieved by choosing $\theta=\pi$, for which the beam-splitter gate
implements a full swap. Because Eq.~\eqref{eq:Hbs} is bilinear in the annihilation
operators, the swap implements a fixed symplectic transformation that is independent
of the state being released~\cite{Pfaff2017}. Thus, each traveling qumode interacts
with the memory register exactly once and, after emission into the transmission line,
can no longer be operated on by the emitting module.

\subsection{Memory Cost for a Fixed Emission Order}

Because emitted qumodes cannot be accessed again, the emission order affects the
required memory. Let $S[o,a]$ denote the $2\times2$ block of $S$ that maps input
qumode $a$ to output qumode $o$. The support of output $o$ is defined as
\begin{equation}
  \supp(o) = \{\,a : S[o,a]\neq 0\,\}.
  \label{eq:support}
\end{equation}

For the exact implementation considered here, the support is determined solely by
whether or not a block is zero as every nonzero block $S[o,a]$ is included in
$\operatorname{supp}(o)$, regardless of its magnitude. As a result, even very small
nonzero blocks can enlarge the support and increase the predicted memory cost.
When many such blocks are present, an approximate implementation may therefore
provide a more practical description. One possible approach is to choose a matrix
norm and a threshold $\varepsilon>0$, and define
\begin{equation}
\operatorname{supp}_{\varepsilon}(o)
=
\left\{
a:\|S[o,a]\|>\varepsilon
\right\},
\end{equation}
so that blocks satisfying $\|S[o,a]\|\leq\varepsilon$ are neglected. This thresholded
support may reduce the required memory, but it also perturbs the target transformation.
A complete treatment would therefore require quantifying the resulting implementation
error, determining how it accumulates across the protocol, and analyzing its dependence
on the input state. Such an approximate formulation is beyond the scope of the present
work, which focuses on exact implementations.

To implement $S$ faithfully, every input in $\supp(o)$ must be present in the
register before $o$ is emitted. Thus, when the $k$-th output $o_k$ is released, the
set of injected inputs is
\begin{equation}
  P_k = \bigcup_{j\le k}\supp(o_j),
  \label{eq:Pk}
\end{equation}
while the outputs that remain to be emitted are
$F_k=\{o_k,\dots,o_N\}$.

Although inputs could be injected earlier, doing so cannot reduce the required
memory. A memory-optimal protocol therefore injects each input only when it is first
needed, so Eq.~\eqref{eq:Pk} gives the injected set at every step. The correlations
that must remain in the register at the $k$-th emission are described by the
submatrix $S[F_k;P_k]$, whose rows correspond to the quadratures of the outputs in
$F_k$ and whose columns correspond to those of the inputs in $P_k$. Since these
inputs cannot be injected again and these outputs have not yet been released, the
memory register is the only system that can retain these correlations. The rank of
$S[F_k;P_k]$ counts the number of independent quadratures involved, and the factor
of $\tfrac12$ converts these quadratures into conjugate pairs, one pair per memory
qumode. Thus, $\tfrac12\rank S[F_k;P_k]$ is the natural candidate for the number of
memory qumodes required at this step. The next section proves that this rank is even
and that this candidate is exact.
The largest value of this quantity
over the emission sequence is therefore the natural candidate for the memory cost
of implementing $S$ in the chosen order.

Section~\ref{sec:mincav} shows that this candidate is exact: No faithful
implementation can use fewer memory qumodes, and an explicit protocol achieves the
bound. The same section also reduces the rank expression to a simple count, namely
the number of inputs injected so far minus the number of outputs already emitted.

\section{Minimum Number of Memory Qumodes}
\label{sec:mincav}

Section~\ref{sec:model} showed that, before the $k$-th output is emitted, the memory
register must preserve the correlations described by the submatrix $S[F_k;P_k]$.
These correlations require at least
$\tfrac12\rank S[F_k;P_k]$ memory qumodes at that step. Therefore, the largest of
these quantities over the full emission sequence is a lower bound on the memory cost
for the chosen emission order. It is not immediate that this lower bound can actually
be achieved, because the memory must preserve the correct information not only for the
current output but also for all later outputs. The following theorem proves that a
single sequential protocol can always achieve this bound. A subsequent corollary
reduces the rank expression to a simple count determined only by the support pattern
of $S$.

\begin{theorem}[Minimum number of memory qumodes]\label{thm:ncav}
For a given emission order $o_1,\dots,o_N$, the minimum number of memory qumodes
required to realize $S\in\Sp(2N,\R)$ faithfully under sequential emission by single
full swaps is
\begin{equation}
  n_{\mathrm{mem}}(S)
  =
  \max_{1\le k\le N}\tfrac12\rank S[F_k;P_k],
  \label{eq:ncav}
\end{equation}
where
$P_k=\bigcup_{j\le k}\supp(o_j)$ and
$F_k=\{o_k,\dots,o_N\}$.
Each $\rank S[F_k;P_k]$ is even, and therefore $n_{\mathrm{mem}}(S)$ is an integer.
\end{theorem}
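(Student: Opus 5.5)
Three things must be shown: each $\rank S[F_k;P_k]$ is even, every faithful protocol needs at least $\tfrac12\rank S[F_k;P_k]$ memory qumodes at step $k$, and a single protocol attains the maximum over $k$ at every step.

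\textbf{Evenness.} Write $R_k$ for the set of inputs not yet injected and $E_k=\{o_1,\dots,o_{k-1}\}$ for the outputs already emitted.

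Since $S$ is symplectic, its rows form a symplectic basis of $\R^{2N}$ with respect to $\Om$.

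Let $V_k\subset\R^{2N}$ be the span of the rows indexed by $F_k$. Because $F_k$ is a union of whole modes, $V_k$ is a symplectic subspace. Let $W_k$ be the coordinate subspace of the quadratures in $R_k$, which is also symplectic.

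Every output $o_j$ with $j<k$ has $\supp(o_j)\subset P_k$. Hence the rows indexed by $E_k$ vanish on $R_k$, i.e., they lie in $W_k^{\perp_\Om}$, the coordinate space of $P_k$.

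Projecting the rows of $V_k$ onto the $P_k$ coordinates gives a map $V_k\to\R^{2|P_k|}$. Its image is the row space of $S[F_k;P_k]$, and its kernel is $V_k\cap W_k$.

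I then show that $V_k\cap W_k$ is symplectic. Take $v\in V_k\cap W_k$ that is $\Om$-orthogonal to all of $V_k\cap W_k$. Since $V_k^{\perp_\Om}$ is the span of the $E_k$ rows, which lies in the $P_k$ coordinates, we get $V_k^{\perp_\Om}\subset W_k^{\perp_\Om}$, equivalently $W_k\subset V_k$. So $V_k\cap W_k=W_k$, which is symplectic and has dimension $2|R_k|$.

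It follows that
\[
\rank S[F_k;P_k]=2(N-k+1)-2|R_k|=2\big(|P_k|-(k-1)\big),
\]
which is even. As a by-product this is already the counting rule of the later corollary.

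\textbf{Lower bound.} At step $k$, treat the module as a bipartition between already-released systems (emitted outputs, plus nothing else) and the register together with the future inputs. Faithfulness means that the future outputs $F_k$ are produced by a Gaussian map acting on the register modes $M$ and the inputs $R_k$ only. Hence every row of $S$ indexed by $F_k$ is a linear combination of the $2|M|$ register quadratures and the $R_k$ quadratures.

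The register quadratures are themselves linear functions of the inputs. The inputs in $P_k$ enter $F_k$ only through the register, since emitted outputs never return. Therefore $S[F_k;P_k]$ factors through a $2|M|$-dimensional space, giving $\rank S[F_k;P_k]\le 2|M|$.

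To make this rigorous I would work at the level of Heisenberg-picture linear maps. Any faithful implementation composes symplectic maps, and the block structure forces the stated factorization. Arbitrary input states make this necessary, since the map must act correctly on all quadrature vectors.

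\textbf{Achievability.} This is the main obstacle. I construct the protocol inductively with the invariant that after emitting $o_{k-1}$, the register holds $m_k=|P_{k-1}|-(k-1)$ modes whose quadratures span exactly the row space of $S[F_k;P_{k-1}]$, realized as a symplectic basis of the symplectic complement. The rest of the quadratures have been emitted, and unused vacuum modes are kept aside.

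At step $k$:
\begin{itemize}
\item inject the inputs in $\supp(o_k)\setminus P_{k-1}$;
\item use the symplectic structure from the evenness argument (the $E_k$ rows plus a symplectic basis of the new register space together form a symplectic basis after change of coordinates) to apply an in-register symplectic map that places $o_k$'s quadrature pair exactly on one mode;
\item swap that mode out.
\end{itemize}
The remaining register then spans the required space for $F_{k+1}$.

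The key lemma is that a symplectic pair, namely row $o_k$, can be completed inside the symplectic space $\mathrm{span}(\text{register}\cup \text{new inputs})$ and split off by a symplectic map. This holds by symplectic Gram--Schmidt (Witt's theorem), since the relevant spaces are symplectic by the argument above. The register size used at step $k$ is $|P_k|-(k-1)=\tfrac12\rank S[F_k;P_k]$, and taking the maximum over $k$ gives the claimed value.
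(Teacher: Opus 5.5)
Your argument is correct. The achievability half is the paper's own: $o_k$ has no component along the already-emitted output pairs, so it lies in the register span, and it is split off by symplectic Gram--Schmidt completion and a full swap.

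The other two pieces take a different route.

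\emph{Evenness.} The paper appeals to the symplectic Schmidt decomposition. It obtains $\rank S[F_k;P_k]=2(|P_k|-k+1)$ only later, in its counting corollary, by combining a full-column-rank inequality with a symplectic-complement dimension count. Your observation that the span of the $E_k$ rows, namely $V_k^{\perp_\Om}$, lies inside $W_k^{\perp_\Om}$ forces $W_k\subseteq V_k$. The projection kernel is therefore exactly $W_k$. This gives evenness and the counting formula at once. It also makes your register size $|P_k|-(k-1)$ equal to $\tfrac12\rank S[F_k;P_k]$ directly, whereas the theorem's proof only bounds it and needs the corollary for equality.

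\emph{Lower bound.} The paper computes a bond dimension $b_k=\tfrac12[\dim\pi_W(L)-\dim(L\cap W)]$ from the Lagrangian graph $L$ of $S$, and then simply asserts that the register must carry it. Your factorization of $S[F_k;P_k]$ through the $2|M|$ register quadratures is the rigorous content of that assertion. Vacuum variables from swapped-out modes do not disturb it, since you only compare coefficients of the $P_k$ inputs.

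The paper's formulation buys a conceptual link to Gaussian matrix-product-state bond dimension and the Choi picture. Yours is shorter and fully self-contained.

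Two cosmetic points. First, you use $R_k$ both for the not-yet-injected inputs and for the complement of $P_k$. These agree only under just-in-time injection, but the lower bound needs only that the not-yet-injected inputs are disjoint from $P_k$. That holds because each $o_j$ with $j\le k$ is fixed at its emission as a function of already-injected inputs. Second, the sentence beginning ``Take $v\in V_k\cap W_k$ that is $\Om$-orthogonal'' is a leftover that your proof never uses.
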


The proof consists of two parts. The lower bound shows that any faithful sequential
implementation must use at least the number of memory qumodes in
Eq.~\eqref{eq:ncav}. The upper bound gives an explicit construction that achieves
this value.

\subsection{Lower Bound}
\label{sec:mincav-lower}
At each emission step, any correlations between the inputs already injected and the outputs not yet emitted must be preserved in the memory register, which gives the lower bound on the required memory.
To derive the lower bound, we describe the input and output quadratures in a common
space,
$V=\R^{2N}_{\mathrm{in}}\oplus\R^{2N}_{\mathrm{out}}$,
where the two components correspond to the input and output phase spaces. We equip
$V$ with the symplectic form
$\Om_V=\Om_{\mathrm{in}}\oplus(-\Om_{\mathrm{out}})$.
The transformation $S$ is represented by its graph,
\begin{equation}
  L=\{(\mathbf r,S\mathbf r):\mathbf r\in\R^{2N}\},
  \qquad \dim L=2N.
  \label{eq:graph}
\end{equation}
The minus sign in $\Om_V$ makes $L$ a Lagrangian subspace. Indeed, for any
$(\mathbf r,S\mathbf r),(\mathbf r',S\mathbf r')\in L$,
$\mathbf r^{\top}\Om\mathbf r'
-(S\mathbf r)^{\top}\Om(S\mathbf r')=0$
because $S^{\top}\Om S=\Om$, and $\dim L=\tfrac12\dim V$.
This representation is the Gaussian analogue of the Choi construction and allows
the input--output correlations to be analyzed across the emission sequence
~\cite{Schon2005,Schon2007,Adesso2006,Eisert2010}.

Consider the cut immediately before the $k$-th emission. The future subspace
contains the remaining inputs $\bar P_k$ and the outputs
$F_k=\{o_k,\dots,o_N\}$ that have not yet been emitted. The past subspace $W$
contains the injected inputs $P_k$ and the previously emitted outputs
$\bar F_k=\{o_1,\dots,o_{k-1}\}$, the complement of $F_k$. Because each input
is injected before its first use, $S[\bar F_k;\bar P_k]=0$ by
Eqs.~\eqref{eq:support} and~\eqref{eq:Pk}.

The number of qumodes shared across this cut is
\begin{equation}
  b_k=\tfrac12\bigl[\dim\pi_W(L)-\dim(L\cap W)\bigr],
  \label{eq:bond}
\end{equation}
where $\pi_W(L)$ is the projection of $L$ onto the past subspace, while $L\cap W$
is the part of $L$ contained entirely in the past. Their dimension difference counts
the past quadrature degrees of freedom that remain connected to the future; the
factor $1/2$ converts quadrature pairs into qumodes. This is the Gaussian analogue of
the bond dimension in a matrix-product-state description
~\cite{Schon2005,Adesso2006}.

Ordering the inputs as $(P_k,\bar P_k)$ and the outputs as $(\bar F_k,F_k)$ gives
\begin{equation}
  S=
  \begin{pmatrix}
    S[\bar F_k;P_k] & 0\\[2pt]
    S[F_k;P_k] & S[F_k;\bar P_k]
  \end{pmatrix}.
  \label{eq:Sblock}
\end{equation}
For $(\mathbf r,S\mathbf r)\in L$, the past component is
$\bigl(\mathbf r_{P_k},S[\bar F_k;P_k]\mathbf r_{P_k}\bigr)$.
Since $\mathbf r_{P_k}\in\R^{2|P_k|}$ is arbitrary,
$\dim\pi_W(L)=2|P_k|$. Here $|\cdot|$ denotes the number of elements of a set.

A point of $L$ lies entirely in $W$ when
$\mathbf r_{\bar P_k}=0$ and
$S[F_k;P_k]\mathbf r_{P_k}=0$.
Hence,
$\dim(L\cap W)=2|P_k|-\rank S[F_k;P_k]$.
Substituting these dimensions into Eq.~\eqref{eq:bond} gives
\begin{equation}
  b_k=\tfrac12\rank S[F_k;P_k].
  \label{eq:bk}
\end{equation}
The symplectic Schmidt decomposition groups the shared quadratures into conjugate
pairs~\cite{Botero2003,Adesso2006}, so the rank is even and $b_k$ is an integer.

At this cut, the memory register is the only system connecting the injected inputs
$P_k$ to the outputs $F_k$ that remain to be emitted. A faithful implementation must
therefore retain at least $b_k$ memory qumodes. Maximizing over the sequence yields
\begin{equation}
  n_{\mathrm{mem}}(S) \ge \max_k b_k
  = \max_k \tfrac12\rank S[F_k;P_k].
  \label{eq:lb}
\end{equation}

\subsection{Upper Bound}
\label{sec:mincav-upper}

A sequential construction is given that realizes $S$ using at most
$\max_k \tfrac12\rank S[F_k;P_k]$ memory qumodes, thereby matching the lower bound
derived above.

The rows of $S$ form a symplectic frame. Since $S\Om S^{\top}=\Om$,
\begin{equation}
  \{o_i,o_j\}_{\Om}
  =
  \delta_{ij}
  \begin{pmatrix}
    0 & 1\\
    -1 & 0
  \end{pmatrix}.
  \label{eq:symframe}
\end{equation}
Thus, the two quadratures of each output qumode satisfy the canonical commutation
relation, while quadratures belonging to different output qumodes commute.

Consider the $k$-th emission step. The inputs in $P_k$ have already been injected,
and the outputs $o_1,\dots,o_{k-1}$ have already been emitted. Since
$\supp(o_k)\subseteq P_k$, the output $o_k$ depends only on inputs already loaded
into the device. It also commutes with every previously emitted output by
Eq.~\eqref{eq:symframe}. The injected input space can be decomposed into the emitted
output qumodes and the qumodes still held in the register. Because $o_k$ has no
component along the emitted qumodes, it lies in the span of the current register
qumodes.

A Gaussian operation within the register can therefore map $o_k$ onto a single
memory qumode without introducing an additional qumode. A $\pi$ beam-splitter full
swap then transfers this state to a vacuum traveling qumode, which is released as
$o_k$. Equation~\eqref{eq:symframe} also implies that $o_k$ commutes with all later
outputs, so removing it does not disturb the information required for the remaining
emissions. Repeating this procedure produces all outputs of $S$ in the prescribed
order.

It remains to bound the register size. Immediately before the $k$-th emission,
$|P_k|$ inputs have been injected and $k-1$ outputs have been emitted, so the
register contains $|P_k|-(k-1)$ memory qumodes. The restriction of $S$ to the columns
indexed by $P_k$ has full column rank $2|P_k|$, since these columns are a subset of
the linearly independent columns of $S$. The $2(k-1)$ rows associated with the
emitted outputs are supported entirely on $P_k$ and remain linearly independent.
The remaining rows are those of $S[F_k;P_k]$, and therefore
$\rank S[F_k;P_k]\ge 2|P_k|-2(k-1)$. Hence,
$|P_k|-(k-1)\le \tfrac12\rank S[F_k;P_k]$.

Taking the maximum over the emission sequence gives
\begin{equation}
  \nmem(S)
  \le
  \max_k \tfrac12\rank S[F_k;P_k].
  \label{eq:ub}
\end{equation}

Together with the lower bound in Eq.~\eqref{eq:lb}, this proves
Eq.~\eqref{eq:ncav}.
Appendix~\ref{app:matrix} gives the construction as an explicit
algorithm.

\subsection{Counting Form}
\label{sec:mincav-count}

The rank expression in Eq.~\eqref{eq:ncav} can be evaluated without performing any
rank computation.

\begin{corollary}[Counting form of the minimum]\label{cor:count}
For every $S\in\Sp(2N,\R)$, every emission order, and every step $k$,
\[
  \tfrac12\rank S[F_k;P_k]=|P_k|-(k-1).
\]
Therefore,
\begin{equation}
  \nmem(S)
  =
  \max_{1\le k\le N}
  \bigl(|P_k|-k+1\bigr) .
  \label{eq:count}
\end{equation}

\end{corollary}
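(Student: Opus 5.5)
Prove the identity $\rank S[F_k;P_k]=2|P_k|-2(k-1)$ directly from the block structure in Eq.~\eqref{eq:Sblock}. Then Eq.~\eqref{eq:count} follows immediately from Theorem~\ref{thm:ncav}. The upper-bound argument already gives $\rank S[F_k;P_k]\ge 2|P_k|-2(k-1)$, so only the reverse inequality is needed. Equivalently, I would show that the $2(k-1)$ rows of $S[\bar F_k;P_k]$ together with the rows of $S[F_k;P_k]$ span a space of dimension exactly $2|P_k|$, with no extra room beyond $2(k-1)$ from the emitted rows.

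The column submatrix $S[\,\cdot\,;P_k]$ (all rows, columns $P_k$) has full column rank $2|P_k|$, since $S$ is invertible. Its row space therefore has dimension $2|P_k|$, and
\[
\rank S[\,\cdot\,;P_k]\le \rank S[\bar F_k;P_k]+\rank S[F_k;P_k]
\]
only reproduces the lower bound. For the upper bound on $\rank S[F_k;P_k]$ I will instead use the zero block $S[\bar F_k;\bar P_k]=0$ together with the invertibility of $S$. Consider the first block row $(S[\bar F_k;P_k]\;\;0)$. The full matrix in Eq.~\eqref{eq:Sblock} is invertible, so its determinant expansion along the block structure shows that the $2|\bar P_k|$ columns indexed by $\bar P_k$ must be linearly independent. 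Since these columns vanish on the rows $\bar F_k$, this means $S[F_k;\bar P_k]$ has full column rank $2|\bar P_k|$.

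The key step is the following. By the rank of a block lower-triangular invertible matrix, $2N=\rank S\le \rank S[\bar F_k;P_k]+\rank\bigl(S[F_k;P_k]\;\;S[F_k;\bar P_k]\bigr)$. This direction is not what I need, so I would rather use the symplectic structure. Use $S^{-1}=\Om^{\top}S^{\top}\Om$, so the inverse has blocks $S^{-1}[a;o]\propto S[o;a]^{\top}$ up to the local $\Om$ factors. The zero block $S[\bar F_k;\bar P_k]=0$ thus becomes a zero block $S^{-1}[\bar P_k;\bar F_k]=0$. Applying the block inverse formula, or directly $S^{-1}S=I$ restricted to rows $\bar P_k$ and columns $P_k$, gives
\[
S^{-1}[\bar P_k;F_k]\,S[F_k;P_k]=0 .
\]
Here $S^{-1}[\bar P_k;F_k]$ equals, up to $\Om$ factors, $S[F_k;\bar P_k]^{\top}$, which has full row rank $2|\bar P_k|$ by the previous paragraph. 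So the column space of $S[F_k;P_k]$, which lives in $\R^{2|F_k|}$, lies in the kernel of a matrix of rank $2|\bar P_k|$. Hence
\[
\rank S[F_k;P_k]\le 2|F_k|-2|\bar P_k|=2(N-k+1)-2(N-|P_k|)=2|P_k|-2(k-1).
\]
Combined with the earlier lower bound, this gives equality; evenness is then automatic. Substituting into Eq.~\eqref{eq:ncav} yields Eq.~\eqref{eq:count}.

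The main obstacle is bookkeeping the symplectic transpose identity correctly. Specifically, I must check that $S^{-1}=\Om^{\top}S^{\top}\Om$ maps the $(o,a)$ block to the $(a,o)$ block as $J^{\top}S[o;a]^{\top}J$ with $J$ the $2\times2$ symplectic unit. This guarantees that zero blocks transpose to zero blocks and that ranks of subblocks are preserved. Once that is in place, the rank count is routine.
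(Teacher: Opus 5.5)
Your proof is correct, but it argues on the opposite side of the cut from the paper.

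The paper's upper bound lives in the input space $\R^{2|P_k|}$. The emitted rows $S[\bar F_k;P_k]$ are linearly independent, and because they are supported on $P_k$, the relation $S\Om S^{\top}=\Om$ places every row of $S[F_k;P_k]$ in their symplectic complement, which has dimension $2|P_k|-2(k-1)$.

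Your route lives in the output space $\R^{2|F_k|}$. Pushing the zero block through $S^{-1}=\Om^{\top}S^{\top}\Om$ and reading off the $(\bar P_k,P_k)$ block of $S^{-1}S=I$ amounts to the column relation $S[F_k;\bar P_k]^{\top}\Om_{F_k}S[F_k;P_k]=0$. This is just $S^{\top}\Om S=\Om$ restricted to the columns $\bar P_k$ and $P_k$. The full column rank of $S[F_k;\bar P_k]$ then caps the rank at $2|F_k|-2|\bar P_k|$.

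The two arguments are transposes of each other. Each combines a linear-independence fact (invertibility plus the zero block) with an orthogonality fact (the symplectic condition). They differ in how they read the register size. The paper counts injected inputs minus emitted outputs, which matches the sequential picture behind Lemma~\ref{lem:sep}. Yours counts unemitted outputs minus uninjected inputs, a future-side count obtained by pure block bookkeeping.

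The determinant-expansion remark and the abandoned inequality are unnecessary. Linear independence of the $\bar P_k$ columns follows immediately from the invertibility of $S$.
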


\begin{proof}
The upper-bound argument established
$\rank S[F_k;P_k]\ge 2|P_k|-2(k-1)$.

For the opposite inequality, the previously emitted outputs are supported on the
inputs in $P_k$. By Eq.~\eqref{eq:symframe}, they span a
$2(k-1)$-dimensional symplectic subspace of the $2|P_k|$-dimensional phase space of
the injected inputs. The rows of $S[F_k;P_k]$ are symplectically orthogonal to this
subspace, since they correspond to outputs different from the emitted ones. They
must therefore lie in its symplectic complement, whose dimension is
$2|P_k|-2(k-1)$. Hence,
$\rank S[F_k;P_k]\le 2|P_k|-2(k-1)$, which proves the equality.
\end{proof}

Equation~\eqref{eq:count} depends only on the support sets defined in
Eq.~\eqref{eq:support}, not on the numerical values of the entries of $S$. Thus, the
memory cost is determined by which outputs depend on which inputs. In particular,
changing the squeezing parameters can change the entanglement across an emission cut
without changing the memory cost, provided that the support pattern remains
unchanged. The memory requirement therefore counts the number of qumodes that must be
retained, rather than the amount of entanglement across the cut.

In practice, the cost is obtained by updating the running union
$P_k=\bigcup_{j\le k}\supp(o_j)$ and evaluating $|P_k|-k+1$ at each step. This
requires
$O\!\left(\sum_k |\supp(o_k)|\right)$
time, which is linear in the size of the support data and involves no linear algebra.

\section{Sequential Generation Algorithm}
\label{sec:algorithm}

The target transformation $S$ may be specified either as a symplectic matrix or as
a Gaussian gate sequence. When only the symplectic matrix is available, the
matrix-based protocol in Appendix~\ref{app:matrix}, discussed further in
Sec.~\ref{sec:general}, constructs the required in-register Gaussian operations and
achieves the minimum memory cost given by Theorem~\ref{thm:ncav}. This section
considers the second case, in which $S$ is provided as a Gaussian gate sequence,
\begin{equation}
  S = G_L\cdots G_1,
  \label{eq:decomposition}
\end{equation}
where each gate $G_l$ acts on the set of qumodes $\supp(G_l)$. This is the natural
representation for many applications, since bosonic encoding circuits and
cluster-state constructions are usually specified as circuits rather than as
symplectic matrices.

When a gate sequence is available, the Gaussian operations applied within the
memory register can be taken directly from that sequence and scheduled according
to the chosen emission order, without synthesizing new operations from $S$. The
resulting protocol faithfully realizes the target transformation, although its
memory usage need not attain the minimum given by Theorem~\ref{thm:ncav}.

\subsection{Entanglement Structure of the Gate Sequence}
\label{sec:entstructure}

The protocol is described in the Heisenberg picture. Each memory qumode is represented
by a $2\times 2N$ row block that expresses its two quadratures in terms of the input
quadratures. When a qumode is injected, its row block is the identity on the
corresponding input qumode. Here, $\supp(G_l)$ denotes the fixed set of qumodes on
which the gate $G_l$ acts directly. Applying $G_l$ updates only the row blocks of
the qumodes in $\supp(G_l)$, replacing them by linear combinations of those row
blocks, while all other row blocks remain unchanged. These updated row blocks may
already contain contributions from other input qumodes introduced through earlier
entangling gates. Consequently, although $G_l$ acts directly only on
$\supp(G_l)$, its effect can propagate through the gate sequence and influence the
final row blocks of additional qumodes.

This dependence is described by the entanglement support $E(q)$ of qumode $q$. It is
defined as the smallest set of gate indices satisfying the following two conditions:
It contains every $l$ for which $q\in\supp(G_l)$, and whenever it contains $l$, it
also contains every earlier index $l'<l$ such that
$\supp(G_{l'})\cap\supp(G_l)\neq\emptyset$. Thus, $E(q)$ contains all gates whose
effects can propagate to $q$ through the gate sequence. Any index set with this
closure property is called support-closed. For an index set $F$, let $\Pi_F$ denote
the product of the gates $\{G_l:l\in F\}$ taken in the same order as in
Eq.~\eqref{eq:decomposition}.

\begin{lemma}[Locality of the entanglement support]
\label{lem:entsupport}
Let $F$ be support-closed and satisfy $F\supseteq E(q)$. Then the row pair of qumode
$q$ in $\Pi_F$ is identical to its row pair in the full transformation $S$.
\end{lemma}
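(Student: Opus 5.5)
Fix a support-closed $F\supseteq E(q)$. The plan is to compare the two products $S=G_L\cdots G_1$ and $\Pi_F$ by moving every gate not in $F$ to the left end of the product, where it cannot touch the row pair of $q$.

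First observe that the complement $\bar F=\{1,\dots,L\}\setminus F$ is closed in the opposite direction. Suppose $l\in\bar F$ and $l'>l$ with $\supp(G_{l'})\cap\supp(G_l)\neq\emptyset$. If $l'$ were in $F$, support-closedness of $F$ would force $l\in F$, a contradiction. So $l'\in\bar F$.

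Next, reorder the product by induction on the number of inversions. Whenever an index $m\in\bar F$ is immediately followed in the product (applied right after) by an index $l\in F$ with $l>m$, the closure property just shown says $l$ cannot share support with $G_m$. Hence $G_l G_m=G_m G_l$: the two matrices act as the identity on each other's supports, so they commute as $2N\times2N$ symplectic matrices. Repeated swaps then give
\begin{equation*}
S=\Bigl(\prod_{l\in\bar F}G_l\Bigr)\,\Pi_F ,
\end{equation*}
where both products keep the original order. A cleaner way to organize the swaps is to induct on $L$ and peel off the last gate $G_L$, either absorbing it into $\Pi_F$ or commuting it to the left.

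Finally, let $U=\prod_{l\in\bar F}G_l$. Every $l\in\bar F$ satisfies $q\notin\supp(G_l)$, because every gate acting on $q$ belongs to $E(q)\subseteq F$. Each such $G_l$ is therefore the identity on the $2\times2$ block rows and columns of $q$, and so is their product $U$. The row pair of $q$ in $U\Pi_F$ thus equals the row pair of $q$ in $\Pi_F$, which proves the claim.

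The main obstacle is making the commutation step rigorous. The key point is that disjointness of supports gives exact commutation. The support-closed hypothesis is used only in the direction stated, so the reordering must push $\bar F$ gates leftward (to later times), never pull $F$ gates past earlier $\bar F$ gates with overlapping support. I would set this up carefully as a bubble-sort argument in which each adjacent swap pairs a later $F$-gate with an earlier $\bar F$-gate.
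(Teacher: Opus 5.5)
Your proof is correct, but it takes a different route from the paper's. The paper inducts on $L$ and tracks only the row pair of $q$. If $q\notin\supp(G_L)$, the last gate leaves that row pair unchanged and the hypothesis is applied to $F\setminus\{L\}$. If $q\in\supp(G_L)$, then $L\in F$, and for each $q'\in\supp(G_L)$ the entanglement support of $q'$ in the shortened sequence lies in $E(q)\setminus\{L\}\subseteq F\setminus\{L\}$; hence the row pairs entering $G_L$ agree in both products.

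You instead prove the operator identity $S=\bigl(\prod_{l\in\bar F}G_l\bigr)\Pi_F$ for every support-closed $F$. This uses the closure of $\bar F$ under later overlapping gates and the commutation of gates with disjoint supports. You then read off the row pair of $q$, because no gate in $\bar F$ touches $q$.

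The paper's induction stays inside the row-update picture and needs no reordering bookkeeping, but it relies on the closure of $E(q)$ through the nested supports of the qumodes $q'$. Your argument uses only that every gate acting on $q$ lies in $F$, which is equivalent to $F\supseteq E(q)$ when $F$ is support-closed. It also gives a stronger factorization that does not depend on $q$. Iterating it shows that the block-by-block execution $\Pi_{\mathcal{C}_k}\cdots\Pi_{\mathcal{C}_1}$ of Algorithm~\ref{alg:slicing}, which need not be in globally increasing order, equals $\Pi_F$ for the applied set $F$. This is precisely what justifies identifying the register contents with $\Pi_F$ when the lemma is invoked in Theorem~\ref{thm:slicing}.

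Two small points for the write-up. In the bubble sort, state that each $F$/$\bar F$ pair is swapped at most once, so an adjacent pair with the $\bar F$ gate applied first keeps its original order and satisfies $m<l$. In the induction-on-$L$ variant, it is $G_L$ with $L\in F$ that must be commuted rightward past the $\bar F$ product; $G_L$ with $L\in\bar F$ is already in place.
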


\begin{proof}
The proof is by induction on the sequence length $L$. The statement is immediate for
$L=0$, since both transformations are the identity. Assume it holds for sequences of
length $L-1$, and write
$S=G_L\Pi'$, where $\Pi'=G_{L-1}\cdots G_1$.

If $q\notin\supp(G_L)$, then $L\notin E(q)$ and $G_L$ does not change the row pair of
$q$. The set $F\setminus\{L\}$ remains support-closed for the shortened sequence and
contains the entanglement support of $q$ in that sequence. The result then follows
from the induction hypothesis.

\begin{algorithm}[t]
\caption{Sequential generation of $S$ from a gate sequence}
\label{alg:slicing}
\begin{algorithmic}[1]

\Statex \textbf{Definitions:}
\Statex $G_l$: $l$-th gate in $S=G_L\cdots G_1$.
\Statex $\supp(G_l)$: qumodes acted on by $G_l$.
\Statex $W$: qumode labels reached by the reverse scan, initialized as
$W=\{o_k\}$.
\Statex $\mathcal{C}$: ordered indices of unapplied gates selected by the scan.
\Statex $\mathcal{G}$: sequential operation sequence constructed by the algorithm.

\Statex \rule{\linewidth}{0.4pt}

\Statex \textbf{Input:}
\Statex Gate sequence $S=G_L\cdots G_1$.
\Statex Emission order $o_1,\dots,o_N$.

\Statex \rule{\linewidth}{0.4pt}

\Statex \textbf{Output:}
\Statex Operation sequence $\mathcal{G}$ emitting outputs in the order
$o_1,\dots,o_N$.

\Statex \rule{\linewidth}{0.4pt}

\State $\mathcal{G}\gets[\,]$
\State mark all $G_l$ as unapplied

\For{$k=1,\dots,N$}
  \State $W\gets\{o_k\}$
  \State $\mathcal{C}\gets[\,]$

  \For{$l=L,\dots,1$ with $G_l$ unapplied}
    \If{$\supp(G_l)\cap W\neq\emptyset$}
      \State prepend $l$ to $\mathcal{C}$
      \State $W\gets W\cup\supp(G_l)$
    \EndIf
  \EndFor

  \For{$l\in\mathcal{C}$ in increasing order}
    \State inject qumodes in $\supp(G_l)$ not yet in the register
    \State append these injections to $\mathcal{G}$
    \State apply $G_l$ within the register
    \State mark $G_l$ as applied
    \State append $G_l$ to $\mathcal{G}$
  \EndFor

  \If{$o_k$ is not yet in the register}
    \State inject $o_k$
    \State append the injection to $\mathcal{G}$
  \EndIf

  \State emit the memory qumode carrying $o_k$ through a $\pi$ full swap
  \State append the emission to $\mathcal{G}$
\EndFor

\State \Return $\mathcal{G}$

\end{algorithmic}
\end{algorithm}

If $q\in\supp(G_L)$, then $L\in E(q)\subseteq F$. In both $S$ and $\Pi_F$, the gate
$G_L$ forms the row pair of $q$ from the row pairs of the qumodes
$q'\in\supp(G_L)$ immediately before the final gate. For each such $q'$, its
entanglement support in the shortened sequence is contained in
$E(q)\setminus\{L\}$, and therefore in the support-closed set
$F\setminus\{L\}$. By the induction hypothesis, the required row pairs of all such
$q'$ are the same in the two products. Applying the same gate $G_L$ therefore gives
the same row pair for $q$.
\end{proof}

\subsection{Algorithm}
\label{sec:slicing-alg}

Algorithm~\ref{alg:slicing} follows the prescribed emission order. At step $k$, the
reverse scan identifies the complete set of unapplied gates whose effects can
propagate to the output $o_k$. The set $W$ is initialized with $o_k$ and enlarged
whenever an unapplied gate acts on a qumode already in $W$, thereby tracing the
dependence of $o_k$ backward through the gate sequence. The selected gate indices
are stored in $\mathcal{C}$. Although they are identified by a reverse scan, they
are applied in increasing order, so their relative order is exactly the same as in
the original gate sequence. A qumode is injected when it is first required by one
of these gates.

After all gates in $\mathcal{C}$ have been applied, one memory qumode carries the
complete output $o_k$. This qumode is emitted through a full swap. The in-register
operation at step $k$ is therefore the ordered product of the selected gates. Since
these gates are taken directly from the given gate sequence, no additional synthesis
of the in-register operation is required.


\subsection{Correctness}
\label{sec:slicing-correct}

\begin{theorem}[Faithful realization from a gate sequence]
\label{thm:slicing}
Algorithm~\ref{alg:slicing} applies every gate in
Eq.~\eqref{eq:decomposition} exactly once, emits $o_k$ for each
$k$, and faithfully realizes $S$. Each traveling qumode interacts with
the register only once through a full swap, and every in-register
operation is composed of gates from the original gate sequence.
\end{theorem}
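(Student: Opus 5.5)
The proof will proceed by induction on the emission step $k$, tracking the set $A_k$ of gate indices applied after step $k$. The invariant to maintain is that $A_k$ is support-closed and contains $E(o_j)$ for every $j\le k$. Once this invariant holds, Lemma~\ref{lem:entsupport} applied with $F=A_k$ says that the row pair of $o_k$ in $\Pi_{A_k}$ coincides with its row pair in $S$. The register therefore carries exactly the $k$-th output of $S$ at the moment of its emission.

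\textbf{Step 1: structure of the reverse scan.} I first show that, at step $k$, the set $\mathcal{C}$ produced by the reverse scan equals $E(o_k)\setminus A_{k-1}$. Every selected index belongs to $E(o_k)$: by induction along the scan, every label in $W$ is connected to $o_k$ through a chain of gates with decreasing indices and overlapping supports. Conversely, suppose $l\in E(o_k)$ is unapplied. It is reached either directly, because $o_k\in\supp(G_l)$, or through some later $l''\in E(o_k)$ with $\supp(G_l)\cap\supp(G_{l''})\neq\emptyset$. If $l''$ is unapplied, it was scanned before $l$ and added its support to $W$, so $l$ is selected. If $l''$ was already applied, then $l''\in A_{k-1}$, and support-closedness of $A_{k-1}$ forces $l\in A_{k-1}$, which is a contradiction. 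It follows that $A_k=A_{k-1}\cup\mathcal{C}\supseteq E(o_k)$, and $A_k$ is support-closed, being a union of the support-closed sets $A_{k-1}$ and $E(o_k)$. This closes the induction.

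\textbf{Step 2: order of gate application.} I next check that the physical order in which gates are applied agrees with the product $\Pi_{A_k}$. Within a step, the gates are applied in increasing index order. Across steps, a gate $l\in\mathcal{C}$ applied later might have a smaller index than some already applied gate $l'\in A_{k-1}$. If their supports overlapped, support-closedness would give $l\in A_{k-1}$, so they act on disjoint qumodes and commute. Hence the realized operation sequence equals $\Pi_{A_k}$ up to reordering of commuting gates.

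\textbf{Step 3: emission, injection, and the remaining claims.} Emitting $o_k$ removes only the row pair of $o_k$. I must show that no later gate acts on $o_k$; this is the step I expect to be the main obstacle. Suppose an unapplied $G_l$ with $o_k\in\supp(G_l)$ existed. Then $l\in E(o_k)$, and Step~1 shows it would already have been applied, so no such gate remains. Injections happen exactly when a gate first touches a qumode, or just before emission of $o_k$ if no gate touched it. At step $N$ every gate is applied: each $G_l$ acts on some qumode $q$, so $l\in E(q)$, and $q$ is eventually emitted. No gate is applied twice, because of the applied marks. Each traveling qumode interacts with the register only through its single swap. Finally, all in-register operations come from the original gate list, so the outputs emitted are exactly the rows of $S$, and $S$ is faithfully realized.
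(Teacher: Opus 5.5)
Your proposal is correct and follows the same route as the paper's proof. You keep the applied set support-closed and containing $E(o_k)$ at the $k$-th emission, invoke Lemma~\ref{lem:entsupport}, observe that every gate touching an emitted qumode has already been applied, and note that every gate lies in some $E(q)$. Your Step~1, which proves $\mathcal{C}=E(o_k)\setminus A_{k-1}$, and your Step~2, which shows that gates applied out of index order across steps act on disjoint qumodes and so the realized product equals $\Pi_{A_k}$, spell out two points that the paper's proof only asserts or leaves implicit.
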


\begin{proof}
At every stage, the set of applied gates is support-closed, because it
is enlarged only by adding complete entanglement supports. Marking the
gates after application ensures that no gate is applied more than once.

Before the $k$-th emission, all gates in $E(o_k)$ have been applied.
The reverse scan selects every unapplied gate whose action can influence
$o_k$. Since every qumode is eventually emitted, every gate is
eventually selected and therefore applied exactly once.

At the $k$-th emission, the applied gate set is support-closed and
contains $E(o_k)$. By Lemma~\ref{lem:entsupport}, the row pair of the
memory qumode carrying $o_k$ is therefore identical to the row pair of
$o_k$ in the full transformation $S$. The $\pi$ full swap transfers
this row pair exactly to the vacuum traveling qumode.

No later gate acts on an emitted qumode, since every gate affecting that
qumode is applied before its emission. Previously emitted outputs
therefore remain unchanged. After all $N$ steps, the traveling qumodes
carry all output row pairs of $S$, and the joint transformation is
exactly $S$.
\end{proof}

The schedule is obtained in $O(NL)$ classical time, where $N$ is the
number of output qumodes and $L$ is the number of gates in the input
sequence. No matrix completion, factorization, or synthesis is
required. Each of the original $L$ Gaussian gates is applied exactly
once.

\subsection{Memory Usage and Optimality}
\label{sec:width}

The memory required by Algorithm~\ref{alg:slicing} depends on the chosen gate
sequence. A qumode enters the register when the first applied gate acts on it and
remains there until its emission. Let $r_k$ denote the number of memory qumodes in
the register immediately before the $k$-th emission. The peak occupation is
\begin{equation}
  w=\max_{1\le k\le N} r_k,
  \label{eq:width}
\end{equation}
which defines the entanglement width of the gate sequence for the chosen emission
order.

\begin{proposition}[Entanglement width versus the minimum]
\label{prop:width}
For any gate sequence realizing $S$ and any emission order,
\[
  r_k\ge \tfrac12\rank S[F_k;P_k]
\]
at every step. Consequently, $w\ge\nmem(S)$. Both quantities can be computed in
time polynomial in $N$ and $L$, so the memory overhead of a given gate sequence can
be determined before the protocol is executed.
\end{proposition}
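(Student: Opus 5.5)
The plan is to compare, at each step $k$, the register contents produced by Algorithm~\ref{alg:slicing} with the injected set $P_k$, and then invoke Corollary~\ref{cor:count}.

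\textbf{Step 1: a lower bound on the register population.} Let $Q_k$ be the set of qumodes injected by the algorithm before the $k$-th emission. First I will show $Q_k\supseteq P_k=\bigcup_{j\le k}\supp(o_j)$. Fix $j\le k$ and take $a\in\supp(o_j)$, so the block $S[o_j,a]\neq0$. By Lemma~\ref{lem:entsupport}, the row pair of $o_j$ equals its row pair in $\Pi_{E(o_j)}$, a product of gates acting only on the qumodes in $\{o_j\}\cup\bigcup_{l\in E(o_j)}\supp(G_l)$. A row block can acquire a nonzero column on $a$ only if $a$ is that qumode itself or $a$ lies in the support of some gate in $E(o_j)$. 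In either case $a$ is injected by step $j\le k$: the gates of $E(o_j)$ are applied no later than step $j$ by Theorem~\ref{thm:slicing}, and $o_j$ itself is injected if absent. Each emission removes exactly one qumode, and the algorithm never injects a qumode after it has been emitted. Hence
\[
r_k=|Q_k|-(k-1)\ge |P_k|-(k-1).
\]
Corollary~\ref{cor:count} gives $|P_k|-(k-1)=\tfrac12\rank S[F_k;P_k]$, which yields the per-step inequality. Maximizing over $k$ then gives $w\ge\max_k\tfrac12\rank S[F_k;P_k]=\nmem(S)$ by Theorem~\ref{thm:ncav}. Alternatively, $w\ge\nmem(S)$ follows directly from the lower bound of Theorem~\ref{thm:ncav}, since Theorem~\ref{thm:slicing} shows the algorithm is a faithful sequential implementation. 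Even so, the per-step statement is best proved by the counting argument above.

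\textbf{Step 2: complexity.} For $w$, I will simulate the schedule of Algorithm~\ref{alg:slicing} in $O(NL)$ time. Along the way I track injections and emissions, which gives every $r_k$. For $\nmem(S)$, I first compute $S$ by multiplying the $L$ gates. Each gate acts on a bounded set, or at worst on $N$ qumodes, so this takes $O(LN^3)$ time at most. I then read off the supports in $O(N^2)$ time and apply the linear-time counting rule of Corollary~\ref{cor:count}. Both computations are polynomial in $N$ and $L$.

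\textbf{Main obstacle.} The delicate point in Step 1 is justifying that a nonzero block $S[o_j,a]$ forces $a$ to be touched before emission $j$. This is a support-propagation claim: columns can only spread through gates acting on qumodes whose rows already carry them. I would make it precise by a short induction paralleling the proof of Lemma~\ref{lem:entsupport}. The statement to carry through the induction is that the column support of any row pair in $\Pi_F$ is contained in the qumode itself together with the supports of the gates in $F$.
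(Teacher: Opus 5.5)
Your proposal is correct and follows essentially the same route as the paper: show that the injected set contains $P_k$, use that the occupation equals the number of injected qumodes minus $k-1$, apply Corollary~\ref{cor:count} and maximize over $k$, and obtain the complexity from the $O(NL)$ schedule together with the counting formula. The only real differences are that you justify $P_k\subseteq Q_k$ via Lemma~\ref{lem:entsupport} and a gate-support induction, and that you spell out how the supports are obtained by multiplying out $S$; the paper instead uses the more direct observation that, by Theorem~\ref{thm:slicing}, the emitted register row pair equals $S[o_j,:]$, and a register row pair can only carry columns of inputs already injected.
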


\begin{proof}
By Theorem~\ref{thm:slicing}, the qumode emitted at step $j$ carries the row pair of
$o_j$ in $S$, whose support is $\supp(o_j)$. An input can appear in this row pair
only after that input has been injected. Therefore, by step $k$, the injected set
must contain
$P_k=\bigcup_{j\le k}\supp(o_j)$.

After $k-1$ outputs have been emitted, the register occupation equals the number of
injected qumodes minus $k-1$. Hence,
\[
  r_k\ge |P_k|-(k-1)
  =\tfrac12\rank S[F_k;P_k],
\]
where the equality follows from Corollary~\ref{cor:count}. Taking the maximum over
$k$ gives $w\ge\nmem(S)$.

The values $r_k$ follow directly from the injection and emission times produced by
the $O(NL)$ scheduling procedure, while $\nmem(S)$ is obtained from the counting
formula in Eq.~\eqref{eq:count}.
\end{proof}

The difference $w-\nmem(S)$ measures the additional memory introduced by the chosen
gate sequence. At step $k$, the minimum required occupation is
$|P_k|-(k-1)$, whereas Algorithm~\ref{alg:slicing} retains every input qumode that
has already been involved in an applied gate. Any input used before the first
emission that depends on it therefore contributes to the excess occupation. Such
early injections may be caused by routing operations or by entanglement that is
created and later removed within the circuit.

A gate sequence attains the minimum when no input qumode is introduced before the
first output that depends on it. Whether this condition is satisfied can be
determined by comparing $w$ with $\nmem(S)$ as described in
Proposition~\ref{prop:width}.

\subsection{The General Case}
\label{sec:general}

When a gate sequence is available, Algorithm~\ref{alg:slicing} offers three practical
advantages. First, the schedule can be constructed in $O(NL)$ time without any
linear-algebra operations. Second, the in-register operations are taken directly
from the original circuit, so the protocol uses exactly the $L$ gates already
specified. Third, the gate sequence determines the in-register operation applied
before each emission, removing the freedom left open by the model in
Sec.~\ref{sec:model}.

If no gate sequence for $S$ is known, or if the sequence has a nonzero memory
overhead according to Proposition~\ref{prop:width}, the matrix-based protocol in
Appendix~\ref{app:matrix} can be used instead. It takes only
$S\in\mathrm{Sp}(2N,\mathbb{R})$ and an emission order as input, and realizes $S$
with exactly $\nmem(S)$ memory qumodes for any symplectic transformation. This is
the explicit construction underlying the upper bound in
Theorem~\ref{thm:ncav}.

The matrix-based protocol achieves optimal memory usage at the cost of additional
synthesis. At each emission step, the target output $o_k$ is first expressed in the
basis of the qumodes currently stored in the register. The resulting conjugate pair
is then extended to a full symplectic transformation by symplectic Gram--Schmidt
completion, after which the qumode carrying $o_k$ is emitted through a full swap.
The complete procedure runs in $O(N^4)$ classical time.

The in-register operations are returned as general symplectic matrices $T_k$ acting
on at most $\nmem(S)$ qumodes, rather than as sequences of elementary gates.
Compiling each $T_k$ using the Bloch--Messiah decomposition \cite{Chakhmakhchyan2018} described in
Appendix~\ref{app:compilation} requires $O(\nmem^2)$ beam splitters and
single-qumode squeezers per step, or $O(N\nmem^2)$ elementary gates over the full
protocol. By comparison, Algorithm~\ref{alg:slicing} applies the original $L$ gates
directly.

The matrix $T_k$ is not unique. The protocol fixes only the conjugate pair that is
to be emitted, while the remaining rows can be completed in different symplectic
bases. This freedom affects the resulting gate decomposition, and the
Gram--Schmidt construction generally couples all qumodes currently in the register.
Finding the completion that minimizes the total elementary-gate count when only
$S$ is given remains an open problem.

After introducing the matrix-based protocol, a comparison with the gate-sequence
protocol of Algorithm~1 clarifies their complementary input settings. When an encoding
circuit is available, Algorithm~1 uses its original gate sequence without additional
synthesis, although its memory usage need not be optimal. When only the target
symplectic matrix $S$ is available, the matrix-based protocol achieves the minimum
memory cost but requires the synthesis of new in-register Gaussian operations. The
comparison therefore highlights the tradeoff between reusing an existing circuit and
minimizing the required memory.

Table~\ref{tab:protocols} summarizes the two protocols. For a given gate sequence,
Proposition~\ref{prop:width} determines whether its memory usage matches the minimum
or whether the matrix-based protocol is preferable.

\begin{table}[t]
\caption{Comparison of the gate-sequence protocol in
Algorithm~\ref{alg:slicing} and the matrix-based protocol in
Algorithm~\ref{alg:matrix} of Appendix~\ref{app:matrix}. Here $L$ is the number of
gates in the input sequence, and $\nmem$ is the minimum memory requirement from
Theorem~\ref{thm:ncav}. The gate-sequence protocol uses the original circuit gates
directly, whereas the matrix-based protocol always achieves the minimum memory but
requires synthesized, gauge-dependent in-register operations.}
\label{tab:protocols}
\begin{ruledtabular}
\begin{tabular}{lcc}
 & Gate sequence & Matrix only \\
\colrule
Input & $S=G_L\cdots G_1$ & $S\in\Sp(2N,\R)$ \\
Memory qumodes & $\ge\nmem$ & $\nmem$ \\
Classical runtime & $O(NL)$ & $O(N^4)$ \\
In-register operations & original circuit gates & synthesized \\
Elementary gates & $L$ & $O(N\nmem^2)$ \\
\end{tabular}
\end{ruledtabular}
\end{table}

\section{Emission-order Dependence: The Chain Code}
\label{sec:chain}

Theorem~\ref{thm:ncav} determines the memory cost for any fixed emission order, but
the order itself remains a design choice. This section examines its effect using a
specific encoding circuit. For the same transformation $S$, different emission
orders can change the required memory from a constant to the maximum value $N$.

\subsection{The Chain Code}
\label{sec:chaincode}

Consider a five-qumode one-dimensional encoding circuit consisting of one data qumode
followed by four ancilla qumodes, as shown in Fig.~\ref{fig:chaincode}. The input
$a_1$ carries the data state, while $a_2,\dots,a_5$ are ancilla qumodes prepared in GKP states. Neighboring qumodes are coupled by
two-mode squeezing (TMS), as in GKP-stabilizer codes that encode one oscillator
into several oscillators using TMS interactions with GKP ancilla
qumodes~\cite{Wu2023,Brady2024}.

For the quadrature vector
$(\hat q_j,\hat p_j,\hat q_{j+1},\hat p_{j+1})$, a TMS gate with squeezing
parameter $r$ is represented by
\begin{equation}
  G(r) =
  \begin{pmatrix}
    \cosh r\, I & \sinh r\, Z \\
    \sinh r\, Z & \cosh r\, I
  \end{pmatrix},
  \qquad
  Z =
  \begin{pmatrix}
    1 & 0 \\
    0 & -1
  \end{pmatrix},
  \label{eq:tms}
\end{equation}
where $I$ is the $2\times2$ identity matrix. The chain encoder applies this gate
between consecutive qumodes,
\begin{equation}
  S_{\mathrm{chain}} = G_{N-1}\cdots G_2G_1,
  \label{eq:chain}
\end{equation}
where $G_j$ denotes $G(r)$ acting on qumodes $j$ and $j+1$. Thus, the data qumode
is first coupled to the first ancilla, which is then coupled to the next ancilla,
and so on along the chain. The following analysis depends only on
$S_{\mathrm{chain}}$, not on the particular input states, as established in
Sec.~\ref{sec:model}.

\begin{figure}[t]
  \includegraphics[width=\columnwidth]{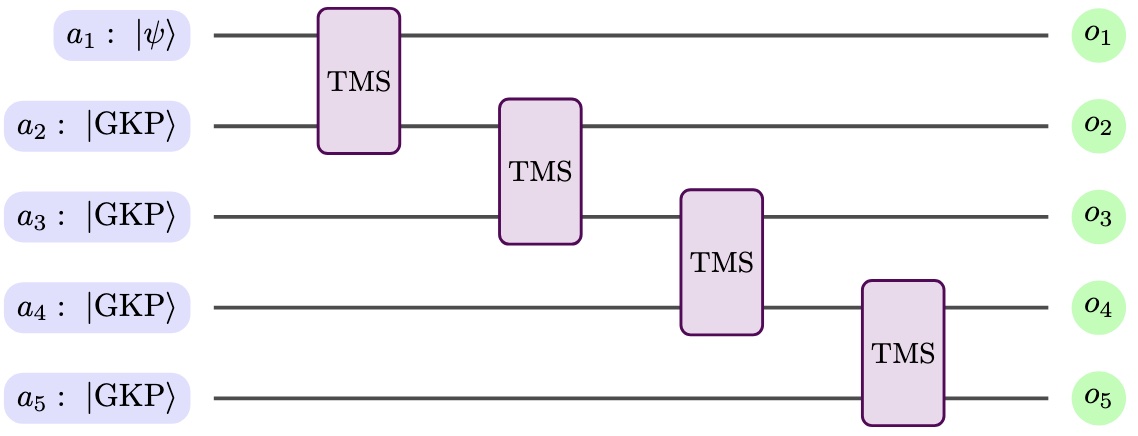}
  \caption{Five-qumode chain encoder. The input qumode $a_1$ carries the data
  state $\ket{\psi}$, while $a_2,\ldots,a_5$ are GKP ancilla qumodes.
  Consecutive qumodes are coupled by nearest-neighbor TMS gates with squeezing
  parameter $r$. The corresponding output qumodes are labeled
  $o_1,\ldots,o_5$.}
  \label{fig:chaincode}
\end{figure}

The support of each output follows from the gate order in
Eq.~\eqref{eq:chain}. Qumode $k$ is acted on only by $G_{k-1}$ and $G_k$, with
$G_0$ and $G_N$ absent. Its row pair becomes final after $G_k$ for
$k\le N-1$, and after $G_{N-1}$ for $k=N$. Since every block of
Eq.~\eqref{eq:tms} is nonzero for $r\neq0$, the dependence on earlier inputs
propagates along the chain:
\begin{equation}
  \supp(o_k) =
  \begin{cases}
    \{a_1,\dots,a_{k+1}\}, & k\le N-1, \\
    \{a_1,\dots,a_N\}, & k=N.
  \end{cases}
  \label{eq:chainsupp}
\end{equation}
Therefore, the final output depends on all input qumodes, while each earlier output
depends only on the inputs up to one position beyond its own.

\subsection{Two Emission Orders}
\label{sec:twoorders}

Two natural choices are the forward order
$o_1,o_2,\dots,o_N$, which follows the chain, and the reverse order
$o_N,o_{N-1},\dots,o_1$. Applying Eq.~\eqref{eq:ncav} to
$S_{\mathrm{chain}}$ shows that these two orders have the largest possible
difference in memory cost. The corresponding protocols for $N=5$ are shown in
Fig.~\ref{fig:chain_emission}.

For the forward order, Eq.~\eqref{eq:chainsupp} gives
$P_k=\{a_1,\dots,a_{k+1}\}$ for $k\le N-1$. Corollary~\ref{cor:count}
therefore gives
$\chi_k=\tfrac12\rank S[F_k;P_k]=|P_k|-(k-1)=2$ for
$k\le N-1$, while $\chi_N=1$. Hence,
$n_{\mathrm{mem}}=2$, independent of $N$. One memory qumode carries the
output $o_k$ being completed, while the other carries the boundary qumode
$k+1$, through which the remaining gates retain access to the injected inputs.

For the reverse order, the first emitted output is $o_N$. Since
$\supp(o_N)=\{a_1,\dots,a_N\}$, all inputs must be injected before the first
emission. Thus, $\chi_1=N$, after which the register occupation decreases as
$\chi_k=N-(k-1)$. The same transformation therefore requires only two memory
qumodes in the forward order but $N$ memory qumodes in the reverse order. For
$N=5$, the two occupation profiles are
$(2,2,2,2,1)$ and $(5,4,3,2,1)$.

Algorithm~\ref{alg:slicing} gives the corresponding gate-level protocols. In the
forward order [Fig.~\ref{fig:chain_emission}(a)], the only unapplied gate needed
to complete $o_k$ is $G_k$. Each step therefore injects one new input qumode,
applies one TMS gate, and emits one output. The protocol reaches a steady pattern
with entanglement width $w=2$, equal to the minimum.

In the reverse order [Fig.~\ref{fig:chain_emission}(b)], the entanglement support
of the first output $o_N$ contains the entire gate sequence. All $N$ inputs and
all $N-1$ TMS gates are therefore required before the first emission. The full
encoding circuit is applied first, followed by the sequential release of all
outputs, giving $w=N$. In both orders, the gate-sequence protocol attains the
minimum for that order. The difference in memory cost is therefore caused entirely
by the emission order.

\begin{figure}[t]
  \includegraphics[width=\columnwidth]{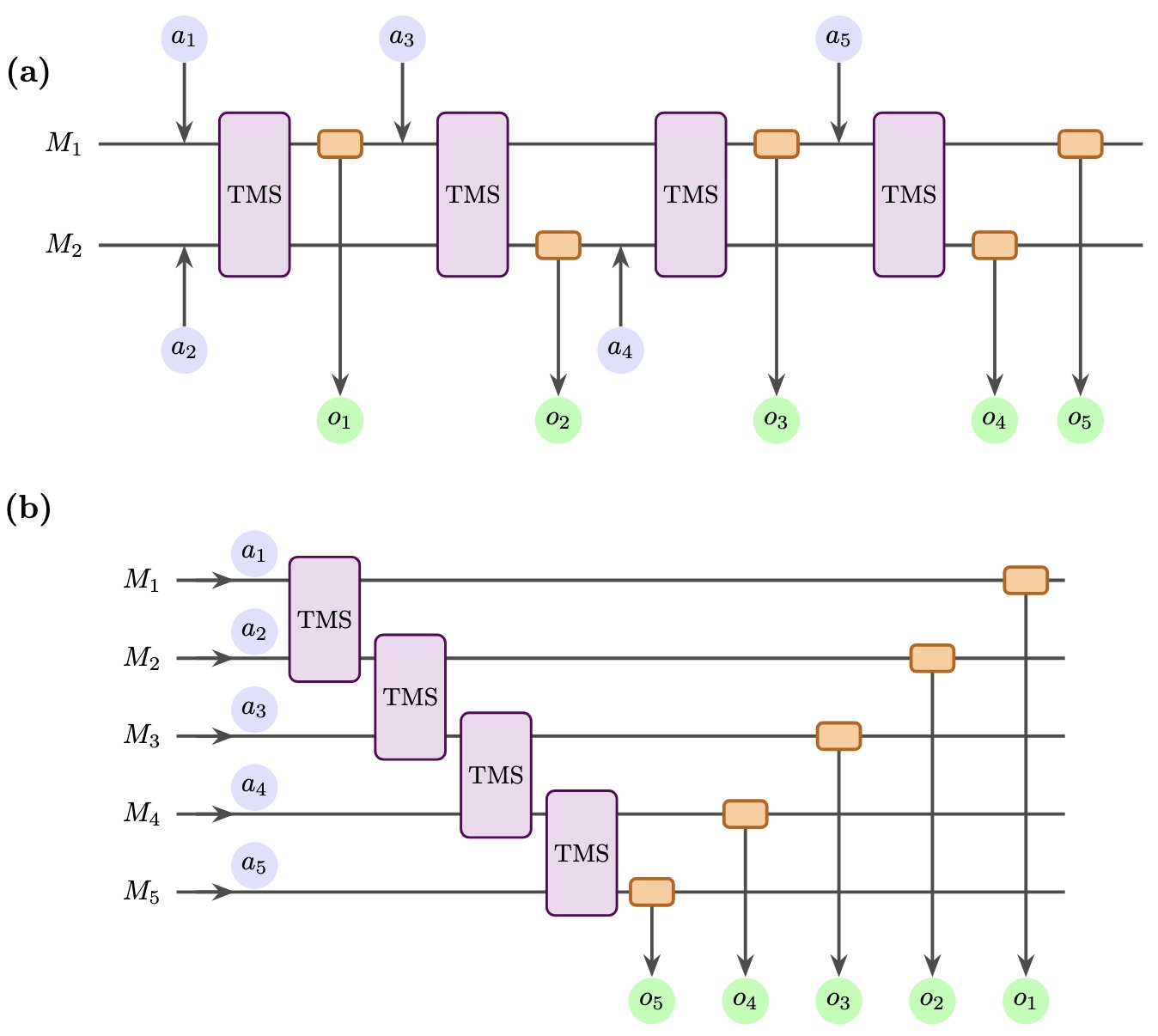}
  \caption{Sequential emission of the $N=5$ chain code in two different orders,
  generated by Algorithm~\ref{alg:slicing}. Horizontal lines denote memory qumodes
  $M_1,M_2,\dots$, vertical arrows denote input injections, and the orange blocks
  denote $\pi$ full swaps to traveling qumodes.
  (a)~Forward order. Two memory qumodes are sufficient, giving
  $n_{\mathrm{mem}}=2$. The first step injects $a_1$ and $a_2$; each subsequent
  step injects the next input into the memory qumode freed by the previous emission,
  applies one TMS gate, and emits the next output. The final step emits $o_5$
  without an additional gate.
  (b)~Reverse order. Since $o_5$ depends on all inputs, all five inputs and all
  four TMS gates are required before the first emission. The encoded outputs are
  then released sequentially, giving $n_{\mathrm{mem}}=5$. Both panels realize
  the same transformation and differ only in the emission order.}
  \label{fig:chain_emission}
\end{figure}

\subsection{The Order-selection Problem}
\label{sec:orderproblem}

The chain code shows that the emission order can change the memory cost of the same
transformation from $O(1)$ to $\Theta(N)$, which is the largest possible variation.
Choosing the emission order is therefore an important part of the protocol design.

Finding the best order is a discrete optimization problem over $N!$ possible
permutations. However, the cost of any fixed order can be evaluated efficiently:
Corollary~\ref{cor:count} reduces it to a linear-time counting procedure. Exhaustive
search is therefore practical only for small $N$. The next section introduces a
heuristic that uses this efficient cost evaluation to select low-memory emission
orders for larger systems.

\section{Heuristic Emission-order Selection}
\label{sec:order}

The previous section introduced the problem of choosing the emission order. This
section presents a deterministic greedy heuristic for an arbitrary
$S\in\mathrm{Sp}(2N,\mathbb{R})$. The heuristic uses only the support sets of the
outputs. Its performance is compared with the certified global optimum on a
running example in this section and with coloring-based alternatives on a larger
benchmark in Appendix~\ref{app:benchmark}.

The output labels $o_1,\dots,o_N$ are fixed by the transformation $S$; only the
order in which these outputs are emitted is varied. This order is recorded by a
permutation $\Xi=(\Xi_1,\dots,\Xi_N)$ of $\{1,\dots,N\}$, where $\Xi_k$ is the
label of the output emitted at step $k$. The sets in Eqs.~\eqref{eq:support}
and~\eqref{eq:Pk} are defined with respect to this order, and
Corollary~\ref{cor:count} gives the memory cost
$n_{\mathrm{mem}}(\Xi)=\max_k\bigl(|P_k|-k+1\bigr)$.

\subsection{Greedy Minimum-frontier Selection}
\label{sec:greedy}

\begin{algorithm}[t]
\caption{Greedy emission-order selection}
\label{alg:greedy}
\begin{algorithmic}[1]

\Statex \textbf{Definitions:}
\Statex $R$: unselected outputs; $J$: required inputs;
$\Xi$: emission order; $c$: maximum register occupation.

\Statex \rule{\linewidth}{0.4pt}

\Statex \textbf{Input:}
\Statex Symplectic matrix $S\in\mathrm{Sp}(2N,\mathbb{R})$.

\Statex \rule{\linewidth}{0.4pt}

\Statex \textbf{Output:}
\Statex Emission order $\Xi$ and memory cost
$c=n_{\mathrm{mem}}(\Xi)$.

\Statex \rule{\linewidth}{0.4pt}

\State compute $\supp(o)$ for every output $o$
\State $R\gets\{o_1,\dots,o_N\}$
\State $J\gets\emptyset$
\State $\Xi\gets[\,]$
\State $c\gets0$

\While{$R\neq\emptyset$}
  \State $o^\ast\gets
  \arg\min_{o\in R}|\supp(o)\setminus J|$
  \Comment{Break ties by output label}
  \State append $o^\ast$ to $\Xi$
  \State $J\gets J\cup\supp(o^\ast)$
  \State $R\gets R\setminus\{o^\ast\}$
  \State $c\gets\max(c,|J|-|\Xi|+1)$
\EndWhile

\State \Return $\Xi,c$

\end{algorithmic}
\end{algorithm}

Algorithm~\ref{alg:greedy} constructs the emission order one output at a time. At
any step, $R$ contains the outputs that have not yet been selected, while $J$
contains the input qumodes already required by the selected outputs. If a remaining
output $o$ is emitted next, the additional input qumodes that must be injected are
those in $\supp(o)\setminus J$.

By Corollary~\ref{cor:count}, minimizing the register occupation at the next step is
therefore equivalent to minimizing $|\supp(o)\setminus J|$. The algorithm selects
the output that requires the fewest new input qumodes, updates $J$, and repeats until
all outputs have been ordered. Ties are resolved by the output labels, making the
procedure deterministic.

After the order has been selected, the protocols in Sec.~\ref{sec:algorithm} and
Appendix~\ref{app:matrix} can be used to realize $S$ with the corresponding memory
cost.

\subsection{Complexity}
\label{sec:greedycost}

Algorithm~\ref{alg:greedy} requires only set operations. If $J$ is stored as a
boolean array, evaluating a candidate output $o$ takes
$O(|\supp(o)|)$ time. At a given step, evaluating all remaining outputs therefore
costs $O\!\left(\sum_{o\in R}|\supp(o)|\right)$. Over all $N$ steps, the total
runtime is
\begin{equation}
  O\!\left(N\sum_o|\supp(o)|\right)
  \le O(N^2s),
  \label{eq:greedy-complexity}
\end{equation}
where $s=\max_o|\supp(o)|$ is the largest output support. No linear-algebra
operations are required.

\subsection{Example: The Analog Shor Code}
\label{sec:shor}

The greedy heuristic is evaluated on the continuous-variable analog of the
nine-qubit Shor code~\cite{Shor1995,Braunstein1998}, whose encoding circuit is shown
in Fig.~\ref{fig:shorcode}. The data state is assigned to qumode $1$, while
qumodes $2,\dots,9$ contain the resource states.

\begin{figure}[t]
  \includegraphics[width=\columnwidth]{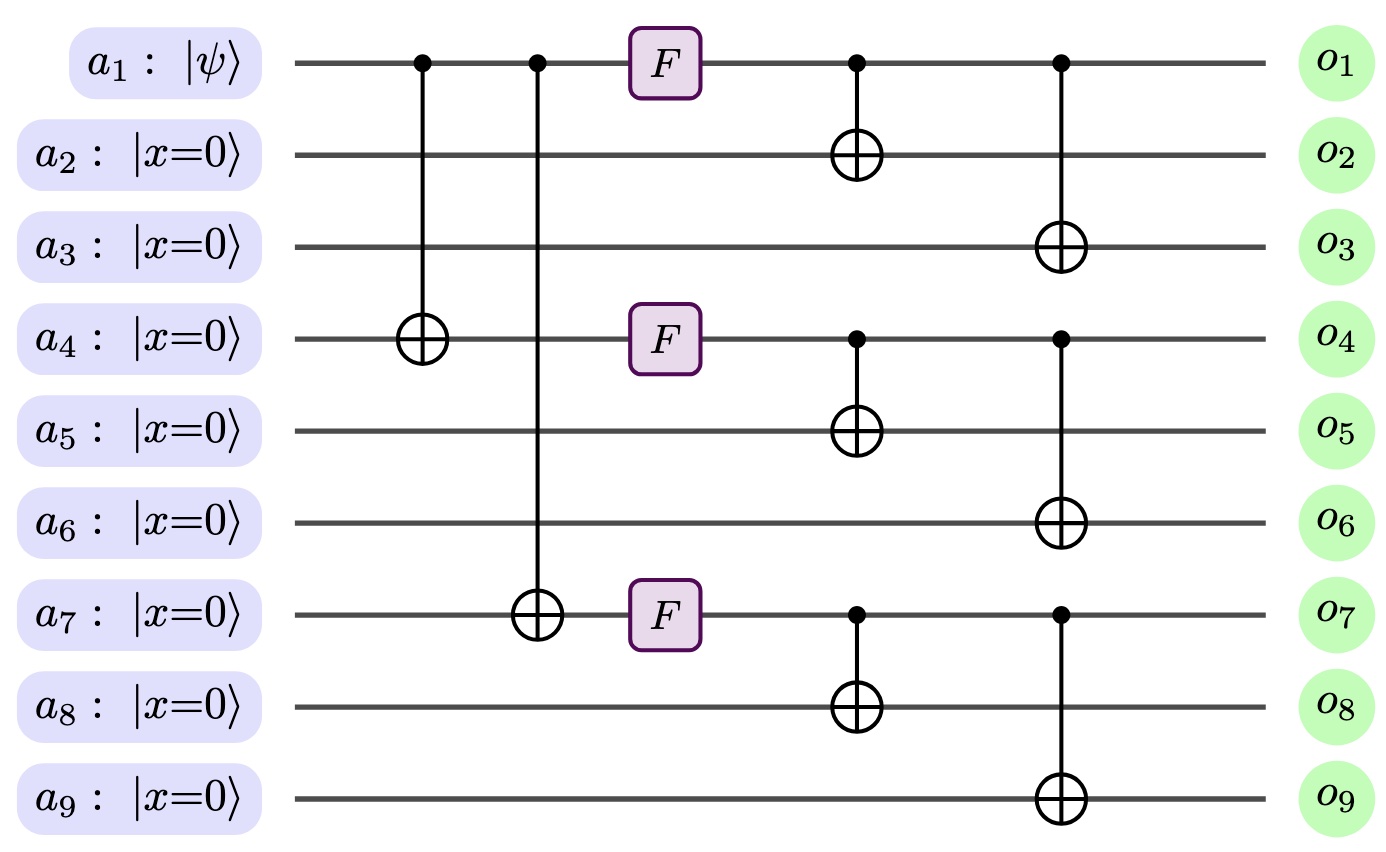}
  \caption{Encoding circuit of the nine-qumode analog Shor code. The input qumode
  $a_1$ carries the data state $\ket{\psi}$, while $a_2,\ldots,a_9$ are initialized
  in the position eigenstate $\ket{x=0}$. The circuit consists of Fourier gates
  $F$ and SUM gates, and the corresponding output qumodes are labeled
  $o_1,\ldots,o_9$.}
  \label{fig:shorcode}
\end{figure}

The memory cost depends strongly on the emission order. Emitting the outputs in the
labeling order $o_1,o_2,\dots,o_9$ requires $n_{\mathrm{mem}}=5$ memory qumodes.
The exact optimum over all emission orders is $n_{\mathrm{mem}}=3$, obtained by a
dynamic program over subsets of emitted outputs with runtime
$O(2^N\mathrm{poly}(N))$. Among 20\,000 uniformly sampled random orders, only
$1.6\%$ achieve cost $3$. The most frequent cost is $5$, and the largest sampled
cost is $7$.

For this encoder, Algorithm~\ref{alg:greedy} reaches the certified optimum and
returns
\begin{equation}
  \Xi=(5,\,6,\,4,\,2,\,1,\,3,\,8,\,7,\,9),
  \label{eq:shororder}
\end{equation}
with occupation profile
$(2,2,2,3,3,2,2,2,1)$ and
$n_{\mathrm{mem}}(\Xi)=3$. The corresponding emission circuit generated by
Algorithm~\ref{alg:slicing} is shown in Fig.~\ref{fig:shor_emission}.

The greedy order follows the hierarchical structure of the encoder and processes
its three-qumode blocks one at a time. During each block, one memory qumode retains
the data qumode, a second retains the central qumode of the block, and a third is
used to complete and emit the other two qumodes in sequence. This structure explains
why three memory qumodes are sufficient.

\begin{figure*}[t]
  \includegraphics[width=0.85\textwidth]{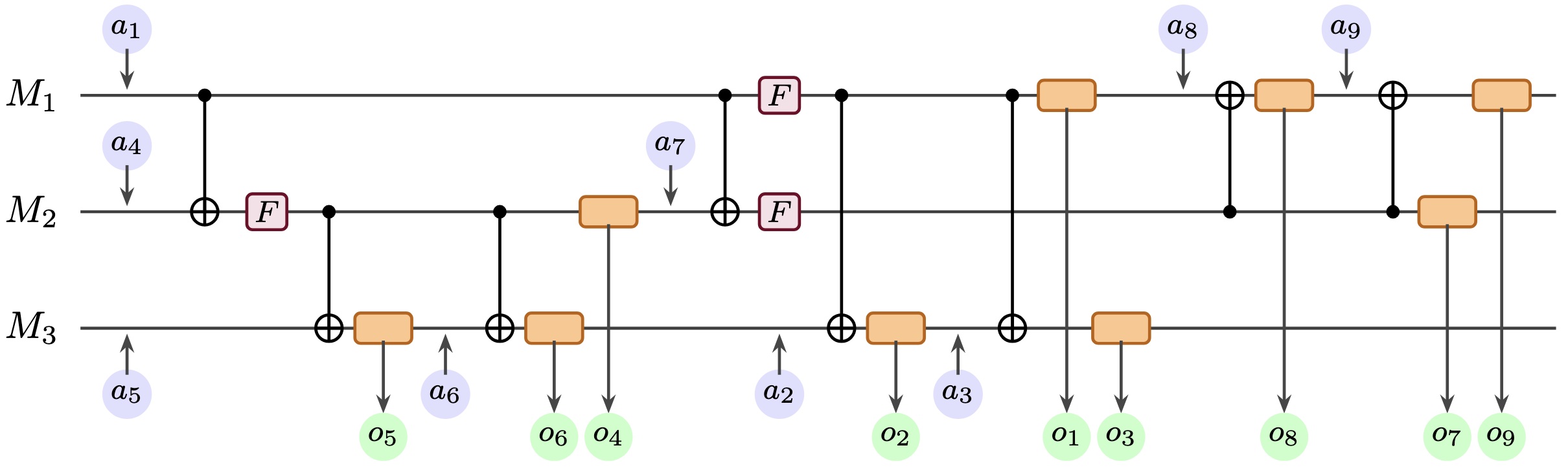}
  \caption{Emission circuit for the analog Shor code in the greedy emission order
  of Eq.~\eqref{eq:shororder}, generated by Algorithm~\ref{alg:slicing}. The
  circuit proceeds from left to right. The horizontal lines represent the three
  memory qumodes $M_1,M_2,M_3$, and the vertical arrows labeled
  $a_1,\ldots,a_9$ indicate input injections. A SUM gate is represented by a
  control dot and a target $\oplus$, while $F$ denotes the Fourier gate. Each
  orange block represents a $\pi$ full swap between a memory qumode and a
  vacuum traveling qumode. The traveling qumode is then emitted as the
  corresponding output $o_k$, indicated by a downward arrow. In the circuit
  shown, the emission operations are distributed among different memory
  qumodes. Nevertheless, only one memory qumode needs to support controlled
  emission into the traveling channel, since additional SWAP gates within the
  register can route each output state to that memory qumode before emission.}
  \label{fig:shor_emission}
\end{figure*}

\subsection{Guarantees and Limitations}
\label{sec:greedylimits}

Algorithm~\ref{alg:greedy} minimizes the register occupation at each individual
step. However, the cost of an emission order is determined by the largest occupation
over the entire sequence, so a choice that is optimal locally may lead to a larger
cost at a later step. The heuristic therefore does not guarantee a globally optimal
order. The computational complexity of finding the exact optimal emission order
also remains open.

The quality of the greedy solution can still be assessed in practice. Its memory
cost is evaluated exactly using Corollary~\ref{cor:count} and can be compared
efficiently with randomly sampled orders. For moderate $N$, the result can also be
compared with the exact optimum obtained by a subset dynamic program based on the
Bellman--Held--Karp approach~\cite{Bellman1962,HeldKarp1962}.

An alternative strategy, inspired by register-allocation
techniques in classical compilers~\cite{Chaitin1982}, is to
order the emissions by graph-coloring heuristics on a
conflict graph derived from the structure of the Gaussian
transformation.
Appendix~\ref{app:benchmark} compares
Algorithm~\ref{alg:greedy} with four such coloring-based
heuristics on a benchmark of 450 random Gaussian transformations
with certified optima. The coloring-based orders recover the
optimum on fewer instances than Algorithm~\ref{alg:greedy},
because they rank the outputs using degree-based proxy
metrics, whereas Algorithm~\ref{alg:greedy} directly tracks
the set of injected inputs $J$, which determines the actual
register occupation.


\section{Memory Scaling for Locally Structured Transformations}
\label{sec:scaling}

The preceding sections showed that the memory cost depends strongly on the emission
order and developed methods for selecting suitable orders for individual encoders.
A broader question is how the minimum memory scales across families of
transformations as the system size increases.

Corollary~\ref{cor:count} shows that the memory cost is determined entirely by the
output supports defined in Eq.~\eqref{eq:support}. For a generic transformation
$S$, every output depends on every input. In that case, all $N$ inputs must be
injected before the first emission, and $N$ memory qumodes are required for any
emission order. A smaller memory register is therefore possible only when the
supports are sparse.

In many physically relevant settings, this sparsity arises from locality. For cluster-state transformations defined on locally connected graphs, each output
depends only on inputs within a local neighborhood
~\cite{Zhang2006,Menicucci2007}. This section shows that, for such transformations, the required memory scales with the size of the boundary between the emitted and unemitted regions rather than with the total number of qumodes. The scaling is independent of the type, number, and
strength of the couplings. It is achieved by a simple sweep through the structure,
without any search over emission orders.

\subsection{Slice-local Transformations}
\label{sec:slicelocal}

Partition the $N$ qumodes into slices $V_1,\dots,V_m$, and let $v(q)$ denote the
index of the slice containing qumode $q$. The transformation $S$ is called
slice-local with range $\rho$ if
\begin{equation}
  |v(a)-v(o)|\le \rho
  \label{eq:slicelocal}
\end{equation}
for every output $o$ and every input $a\in\supp(o)$. Thus, each output depends only
on inputs located within $\rho$ slices of its own slice. The slice order emits all
outputs in $V_1$ first, followed by those in $V_2$, and so on; the order within each
slice is arbitrary.

\begin{theorem}[Memory bound for slice-local transformations]
\label{thm:slab}
Let $S\in\mathrm{Sp}(2N,\mathbb{R})$ be slice-local with range $\rho$ with respect
to the slices $V_1,\dots,V_m$. Under the slice order,
\begin{equation}
  n_{\mathrm{mem}}
  \le
  (\rho+1)\max_i |V_i|.
  \label{eq:slab}
\end{equation}
\end{theorem}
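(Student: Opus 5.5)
We plan to use the counting form of Corollary~\ref{cor:count}, $n_{\mathrm{mem}}=\max_k\bigl(|P_k|-(k-1)\bigr)$, and bound $|P_k|-(k-1)$ at each step of the slice order by combining an upper bound on $|P_k|$ with a lower bound on $k-1$. No rank computation is needed: by the corollary, the memory cost depends only on which inputs have been injected and how many outputs have already been emitted.

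Fix a step $k$ and suppose the output $o_k$ lies in slice $V_i$. In the slice order, all outputs of $V_1,\dots,V_{i-1}$ have already been emitted, so $k-1\ge\sum_{j<i}|V_j|$. Every output emitted so far lies in a slice of index at most $i$. Slice-locality, Eq.~\eqref{eq:slicelocal}, then implies that each input in its support lies in a slice of index at most $i+\rho$. Hence
\begin{equation*}
P_k\subseteq V_1\cup\cdots\cup V_{\min(i+\rho,m)},
\qquad
|P_k|\le\sum_{j\le i+\rho}|V_j| .
\end{equation*}

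Subtracting the two bounds gives
\begin{equation*}
|P_k|-(k-1)\le\sum_{j=i}^{\min(i+\rho,m)}|V_j|\le(\rho+1)\max_j|V_j| .
\end{equation*}
This sum has at most $\rho+1$ terms. Taking the maximum over $k$ and invoking Corollary~\ref{cor:count}, or equivalently Theorem~\ref{thm:ncav}, yields Eq.~\eqref{eq:slab}.

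The argument is nearly routine. The one point needing care is to count exactly the outputs emitted before $o_k$. Outputs of $V_i$ emitted earlier than $o_k$ could only improve the bound, so we discard them. The full slices $V_1,\dots,V_{i-1}$ are what cancel against $P_k$. If a sharper constant were wanted, one could use that earlier outputs of $V_i$ are also subtracted, giving $\sum_{j=i}^{i+\rho}|V_j|$ minus the in-slice position. The stated bound does not require this.
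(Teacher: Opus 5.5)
Your proof is correct and follows the paper's argument essentially step for step. You bound $P_k\subseteq V_1\cup\cdots\cup V_{\min(i+\rho,m)}$ using slice-locality, use $k-1\ge\sum_{j<i}|V_j|$ from the slice order, and subtract via Corollary~\ref{cor:count}, leaving a sum of at most $\rho+1$ slice sizes.
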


\begin{proof}
Consider the $k$-th emission, and let $r=v(o_k)$ be the slice containing $o_k$.
Under the slice order, all outputs emitted up to this step belong to
$V_1,\dots,V_r$. By Eq.~\eqref{eq:slicelocal}, their supports extend no further
than $\rho$ slices beyond $V_r$. Therefore,
\begin{equation}
  P_k
  =
  \bigcup_{j\le k}\supp(o_j)
  \subseteq
  V_1\cup\cdots\cup V_{\min(r+\rho,m)}.
  \label{eq:Pkslab}
\end{equation}

Before the $k$-th emission, all outputs in $V_1,\dots,V_{r-1}$ have already been
emitted, so
$k-1\ge\sum_{i=1}^{r-1}|V_i|$.
Using Corollary~\ref{cor:count},
$\chi_k=|P_k|-(k-1)
\le\sum_{i=r}^{\min(r+\rho,m)}|V_i|
\le(\rho+1)\max_i|V_i|$.
Taking the maximum over all emission steps proves Eq.~\eqref{eq:slab}.
\end{proof}

\subsection{Graph-local Transformations}
\label{sec:lattices}

Graphs provide a natural class of slice-local transformations. A transformation
$S$ is called graph-local on a graph $G$ if
\begin{equation}
  \supp(o)\subseteq \{o\}\cup\mathcal{N}(o)
  \quad\text{for every output }o,
  \label{eq:graphlocal}
\end{equation}
where $\mathcal{N}(o)$ denotes the set of neighbors of $o$ in $G$. Thus, each
output depends only on its own input and on the inputs associated with adjacent
vertices.

\begin{corollary}[Graph-local transformations]
\label{cor:lattice}
Let $S$ be graph-local on $G$, and partition the vertices into slices
$V_1,\dots,V_m$ such that every edge connects vertices in the same slice or in
adjacent slices. Then $S$ is slice-local with range $\rho=1$, and the slice order
satisfies
\[
n_{\mathrm{mem}}\le 2\max_i|V_i|.
\]
For a $D$-dimensional cubic lattice with linear size $\ell$ and
$N=\ell^D$ qumodes, sliced along one axis,
\begin{equation}
  n_{\mathrm{mem}}
  \le
  2\ell^{D-1}
  =
  2N^{(D-1)/D}.
  \label{eq:lattice}
\end{equation}
\end{corollary}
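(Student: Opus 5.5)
The plan is to reduce Corollary~\ref{cor:lattice} to Theorem~\ref{thm:slab} by checking slice-locality with range $\rho=1$, and then to specialize to the cubic lattice by choosing the slicing explicitly.

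\emph{Step 1: range one.} Label inputs and outputs by the same vertex set, so that each input $a$ and output $o$ has a well-defined slice index $v(\cdot)$. Take an output $o$ and an input $a\in\supp(o)$. By Eq.~\eqref{eq:graphlocal}, either $a=o$ or $a\in\mathcal{N}(o)$. If $a=o$, then $|v(a)-v(o)|=0$. If $a$ is a neighbor of $o$, then $(o,a)$ is an edge of $G$. By the hypothesis on the partition, this edge joins vertices in the same slice or in adjacent slices, so $|v(a)-v(o)|\le1$. Hence Eq.~\eqref{eq:slicelocal} holds with $\rho=1$.

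\emph{Step 2: general bound.} Theorem~\ref{thm:slab} now applies directly. It gives $n_{\mathrm{mem}}\le(1+1)\max_i|V_i|=2\max_i|V_i|$ under the slice order.

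\emph{Step 3: cubic lattice.} Identify the vertices with $\{1,\dots,\ell\}^D$, with edges between points at Euclidean (equivalently, $\ell^1$) distance one. Define $V_i=\{x:x_1=i\}$ for $i=1,\dots,\ell$, so that $m=\ell$ and $|V_i|=\ell^{D-1}$ for every $i$.
\begin{itemize}
\item An edge changes exactly one coordinate, by $\pm1$.
\item If that coordinate is $x_1$, the edge joins adjacent slices.
\item Otherwise, the edge stays inside one slice.
\end{itemize}
So the partition meets the hypothesis, and Step 2 gives $n_{\mathrm{mem}}\le2\ell^{D-1}$. Since $N=\ell^D$, we have $\ell=N^{1/D}$, and therefore $\ell^{D-1}=N^{(D-1)/D}$. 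This is Eq.~\eqref{eq:lattice}.

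\emph{Expected obstacle.} The argument has no real difficulty. The one point needing care is the convention that inputs and outputs share vertex labels, so that $v$ is defined on both. The paper's graph-local definition already assumes this, since it writes $\{o\}\cup\mathcal{N}(o)$ as a set of inputs. The result should also be read as valid for any lattice edge set that changes one coordinate by one, with periodic boundary conditions excluded. With periodic boundaries, slices $V_1$ and $V_\ell$ would be joined by edges, and slicing along a single axis would violate the adjacency hypothesis.
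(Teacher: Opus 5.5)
Your proposal is correct and follows the paper's route: the paper gives no separate proof, treating the corollary as immediate from Theorem~\ref{thm:slab}. Its implicit argument is exactly yours — the edge condition on the partition gives slice-locality with $\rho=1$, the theorem then yields $2\max_i|V_i|$, and axis-aligned slices of the cubic lattice each contain $\ell^{D-1}=N^{(D-1)/D}$ vertices. Your remark that open boundary conditions are implicitly assumed is an accurate clarification of the statement.
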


This condition includes continuous-variable cluster-state transformations
~\cite{Menicucci2006,Menicucci2014}, for which each output depends only on its own
input and those of neighboring vertices. It also allows diagonal, decorated, or
irregular coupling patterns, provided that every coupling connects qumodes in the
same slice or in adjacent slices. The bound depends only on this support pattern,
not on the form or strength of the interactions.

Figure~\ref{fig:front} illustrates the slice order in one and two dimensions. The
emission proceeds slice by slice, while the memory register retains only the
boundary between the emitted and unemitted regions, together with the qumode
currently being completed.

\begin{figure}[t]
  \includegraphics[width=\columnwidth]{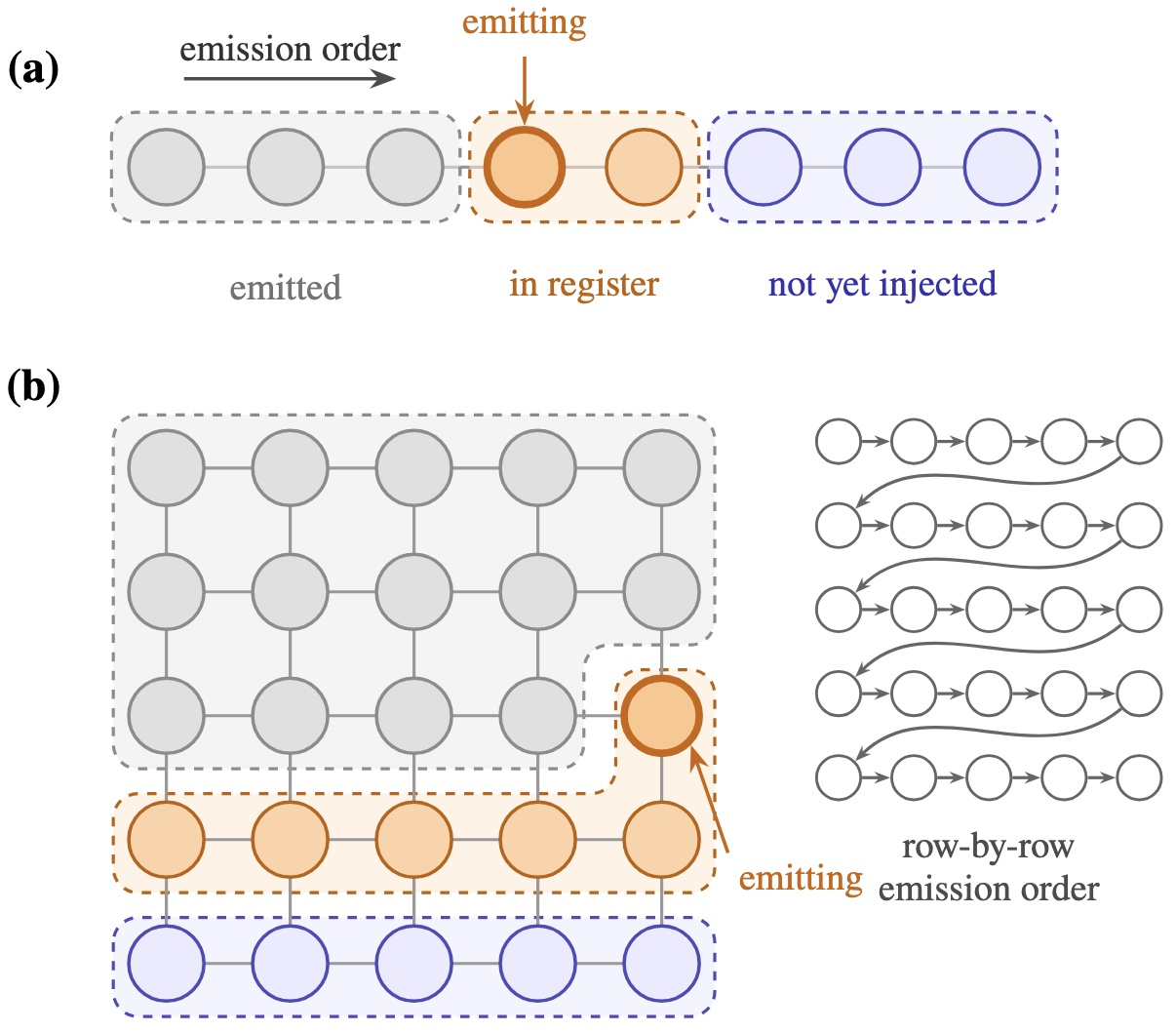}
  \caption{Emission front for graph-local transformations at one step of the slice
  order. The dashed boxes indicate emitted qumodes (gray), qumodes currently held
  in the memory register (orange), and qumodes not yet injected (blue). The qumode
  with the thick outline is the one being emitted.
  (a)~For a one-dimensional chain, each slice contains one qumode, and the outputs
  are emitted sequentially along the chain. At the stage shown, two qumodes are
  held in the memory register.
  (b)~For a two-dimensional lattice of width $\ell$, each row forms one slice. The
  slices are processed row by row, with the emission order illustrated on the
  right. At the stage shown, the memory register contains the qumode being emitted
  together with the injected row ahead of the emission front, for a total of
  $\ell+1$ memory qumodes.}
  \label{fig:front}
\end{figure}

\subsection{Tightness and Significance}
\label{sec:tightness}

For cluster-state transformations defined on locally connected graphs, the
slice order provides a natural sequential implementation whose memory cost is
determined by the corresponding support structure.

\begin{proposition}[Slice-order cost on cubic lattices]
\label{prop:sweep}
Consider the cluster-state transformation on a $D$-dimensional cubic lattice
with nearest-neighbor connectivity and $N=\ell^D$ qumodes, where each lattice
edge represents a controlled-$Z$ interaction. For a slice order taken along one
lattice axis,
\begin{equation}
  n_{\mathrm{mem}}=\ell^{D-1}+1
  \label{eq:sweepcost}
\end{equation}
for every lattice linear size $\ell$ and for any emission order within each
slice.
\end{proposition}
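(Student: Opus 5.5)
The plan is to compute the supports of the cluster-state transformation exactly and then evaluate the counting formula of Corollary~\ref{cor:count} step by step along the slice order. No rank computation is needed.

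\emph{Supports.} Write $S=\prod_{(u,v)}C_Z^{(u,v)}$ over the lattice edges. The controlled-$Z$ gates commute, so the order of the product is irrelevant. Each $C_Z^{(u,v)}$ leaves $\hat q_u,\hat q_v$ unchanged and maps $\hat p_u\mapsto\hat p_u+\hat q_v$ and $\hat p_v\mapsto\hat p_v+\hat q_u$. Composing these gives, for every output $o$:
\begin{itemize}
\item $\hat q_o\mapsto\hat q_o$;
\item $\hat p_o\mapsto\hat p_o+\sum_{v\in\mathcal N(o)}\hat q_v$, where every coefficient equals $1$, so nothing can cancel.
\end{itemize}
Hence $S[o,o]=I\neq0$, and $S[o,v]\neq0$ exactly for $v\in\mathcal N(o)$. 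This gives
\[
\supp(o)=\{o\}\cup\mathcal N(o)
\]
with equality, not merely the inclusion of Eq.~\eqref{eq:graphlocal}. Equality is what the lower half of the argument needs.

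\emph{Slice geometry.} Slice along the first axis, so $V_i=\{x:x_1=i\}$ and $|V_i|=\ell^{D-1}$. Each vertex of $V_r$ has its in-slice neighbours inside $V_r$, and at most one neighbour in each of $V_{r-1}$ and $V_{r+1}$ (its translate). Translation is a bijection between adjacent slices.

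Fix step $k$, with $o_k\in V_r$ the $t$-th output emitted within $V_r$. Then $k-1=(r-1)\ell^{D-1}+(t-1)$. To describe $P_k$, consider two cases.
\begin{itemize}
\item For $r\ge2$, every vertex of $V_r$ is the translate of an already emitted vertex of $V_{r-1}$. Therefore
\[
P_k=V_1\cup\cdots\cup V_r\cup T_{r+1},
\]
where $T_{r+1}\subseteq V_{r+1}$ is the set of translates of the $t$ outputs of $V_r$ emitted so far, with $|T_{r+1}|=t$ if $r<\ell$ and $T_{r+1}=\emptyset$ if $r=\ell$.
\item For $r=1$, we only have $P_k\subseteq V_1\cup T_2$ with $|T_2|=t$.
\end{itemize}
Substituting into $|P_k|-(k-1)$ gives:
\begin{itemize}
\item $\ell^{D-1}+1$ exactly whenever $2\le r<\ell$;
\item at most $\ell^{D-1}+1$ when $r=1$;
\item $\ell^{D-1}-(t-1)\le\ell^{D-1}$ when $r=\ell$.
\end{itemize}
None of these counts depends on the order in which outputs are emitted within a slice, which yields the claimed independence from the intra-slice order.

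\emph{Attainment.} The maximum is attained at the last emission of $V_1$, that is, $r=1$ and $t=\ell^{D-1}$. There all of $V_1$ has been emitted, so $P_k=V_1\cup V_2$ (provided $\ell\ge2$). This gives $2\ell^{D-1}-(\ell^{D-1}-1)=\ell^{D-1}+1$. This step covers $\ell=2$, where the range $2\le r<\ell$ is empty. Taking the maximum over $k$ then gives Eq.~\eqref{eq:sweepcost} via Eq.~\eqref{eq:count}.

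\emph{Main obstacle.} The delicate points are the exact, rather than upper-bounded, description of $P_k$ and the boundary slices $r=1$ and $r=\ell$. I would check carefully that no cancellation can shrink a support; the unit coefficients of the controlled-$Z$ gates make this clear. I would also treat the degenerate lattice $\ell=1$ separately and flag it: there $N=1$ and the count gives $n_{\mathrm{mem}}=1$, not $2$. The statement should therefore be read for $\ell\ge2$.
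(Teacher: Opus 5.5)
Your proof is correct and follows the same route as the paper. You evaluate $|P_k|-(k-1)$ from Corollary~\ref{cor:count} along the slice order and get exactly $\ell^{D-1}+1$ in the interior slices, because the fully emitted slice $V_{r-1}$ forces all of $V_r$ into $P_k$ and each emitted output of $V_r$ adds one translate from $V_{r+1}$. You are more careful than the paper in three places: you prove $\supp(o)=\{o\}\cup\mathcal N(o)$ with equality, which the paper uses tacitly; you exhibit attainment at the last emission of $V_1$, which is needed for $\ell=2$, where the paper's range $2\le r<m$ is empty; and you correctly note that $\ell=1$ gives $\nmem=1$, so the claim ``for every $\ell$'' must be restricted to $\ell\ge2$.
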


\begin{proof}
Consider the emission of the $c$-th output in slice $V_r$, where
$2\le r<m$. At this step, all outputs in $V_1,\dots,V_{r-1}$ and
$c-1$ outputs in $V_r$ have already been emitted. The injected set contains
$V_1,\dots,V_r$. In particular, since all outputs in $V_{r-1}$ have already
been emitted and collectively depend on all inputs in the adjacent slice
$V_r$, every input in $V_r$ must already have been injected. The injected set
also contains the $c$ inputs in $V_{r+1}$ that are adjacent to the first $c$
outputs emitted from $V_r$.

Hence, the number of injected inputs is $r\ell^{D-1}+c$, while the number of
previously emitted outputs is $(r-1)\ell^{D-1}+c-1$.
Corollary~\ref{cor:count} therefore gives
$\chi_k=\ell^{D-1}+1$, independently of $r$, $c$, and the emission order within
the slice. During the first slice, the occupation increases up to this value,
while during the final slice it decreases below it. This proves
Eq.~\eqref{eq:sweepcost}.
\end{proof}



These results establish the following scaling behavior for locally connected
structures. On a $D$-dimensional cubic lattice with $N=\ell^D$ qumodes, the
memory required by the slice order scales as
$O(\ell^{D-1})=O(N^{(D-1)/D})$. In particular, this scaling is constant in one
dimension and $O(\sqrt{N})$ in two dimensions. More generally, the memory grows
with the size of the boundary between the emitted and unemitted regions rather
than with the total number of qumodes.

The slice order discussed above emits one output at a time. Although this minimizes
the number of simultaneously active emission channels, the total generation time
still grows with the number of outputs. A natural way to shorten this time is to
allow several outputs to be emitted in parallel. Such parallelization can be
implemented in two ways. One approach uses a single memory register equipped with
multiple emission heads, so that several outputs can be released simultaneously
from different memory qumodes. The other uses multiple memory registers, each with
its own emission head, and distributes the generation process among the registers.
Both approaches reduce the number of sequential emission steps. However, multiple
emission heads within a shared register can reuse part of the stored information
and therefore require only a moderate increase in memory, whereas multiple
independent registers duplicate part of the memory associated with the emission
front. Consequently, when simultaneous release from several memory qumodes is
experimentally available, parallel emission within a single register provides a
more memory-efficient tradeoff between generation time and hardware resources.

A similar slice-by-slice generation structure appears in time-multiplexed optical
experiments, where two-dimensional cluster states are generated by retaining
qumodes in delay lines while earlier outputs are emitted
~\cite{Asavanant2019,Larsen2019}. These experiments prepare specific cluster states
from fixed squeezed-state inputs. In contrast, Theorem~\ref{thm:slab} applies to
the full Gaussian transformation and therefore to arbitrary input states. It also
applies to any coupling pattern satisfying the locality condition, while
Eq.~\eqref{eq:ncav} gives the exact memory cost for a specified emission order.


\section{Conclusion and Outlook}
\label{sec:conclusion}

This work establishes the optimal memory cost of sequentially realizing a multimode
Gaussian transformation. For any $S\in\mathrm{Sp}(2N,\mathbb{R})$ and any fixed
emission order, the minimum number of memory qumodes is given by
Eq.~\eqref{eq:ncav}. Corollary~\ref{cor:count} reduces this rank expression to the
number of input qumodes injected so far minus the number of outputs already emitted.
The memory cost therefore depends only on the support pattern of $S$ and can be
evaluated in time linear in the size of the support data.

The minimum memory cost for an arbitrary symplectic matrix is achieved by the
matrix-based protocol in Appendix~\ref{app:matrix}. When a gate sequence is
available, Algorithm~\ref{alg:slicing} instead constructs the sequential
implementation directly from the original gates. Its memory usage need not attain
the minimum, because the given gate decomposition may not be the most
memory-efficient realization of the same transformation. The emission order is
also an important design choice, since different orders can require substantially
different numbers of memory qumodes for the same gate sequence. To address this
ordering problem, Sec.~\ref{sec:order} introduces a greedy algorithm that selects
the output requiring the fewest new input qumodes at each step and generally
produces memory-efficient emission orders. For transformations with local support
on a $D$-dimensional cubic lattice of $N$ qumodes, the memory grows only with the size of
the boundary between the emitted and unemitted regions, giving
$O(N^{(D-1)/D})$ scaling.

Although the implemented transformations are Gaussian, the protocols do not
assume Gaussian input states and therefore also apply to non-Gaussian inputs,
including GKP and cat states. Since such states provide the non-Gaussian resources
required for universal continuous-variable quantum computation, the proposed
framework offers a resource-efficient and explicit scheme for communication
between modules in modular architectures for universal quantum computation.

Several questions remain open. For qubit graph states, optimizing the photon
emission order to minimize the number of emitters is NP-hard~\cite{Li2022}.
Whether or not optimizing the emission order of an arbitrary Gaussian transformation to
minimize the required number of memory qumodes is also NP-hard remains unknown.
For locally connected lattice families, characterizing how the minimum memory cost
depends on the lattice geometry and emission order also remains an open problem. In
the matrix-based protocol, the symplectic completion used to construct the
in-register operations is generally not unique, and the resulting symplectic
matrices can admit many gate decompositions. Selecting the completions and
decompositions that minimize the total number of beam splitters and
single-qumode squeezers is another open problem. Finally, the present analysis assumes ideal operations. Extending it to include memory decay, imperfect full swaps, and other hardware errors is necessary for a complete assessment of sequential Gaussian implementations.

\begin{acknowledgments}
This work was supported by the U.S. Department of Energy, Office of
Science, Advanced Scientific Computing Research, under Contract
No.~DE-SC0025384. This work was also funded in part by NSF grants
No.~OMA-2120757, No.~PHY-2325080, No.~OSI-2410675, and No.~OSI-2531350.
\end{acknowledgments}

\appendix

\section{Sequential Generation from the Symplectic Matrix}
\label{app:matrix}

This appendix considers the case in which the target transformation is specified
only by a symplectic matrix $S\in\mathrm{Sp}(2N,\mathbb{R})$, with no gate
sequence available. An explicit algorithm is given that realizes $S$ faithfully
under sequential emission by single full swaps while using exactly
$n_{\mathrm{mem}}(S)$ memory qumodes, for any symplectic $S$ and any emission
order. This construction provides the upper bound in
Theorem~\ref{thm:ncav} and can be used whenever no gate sequence is known or when
the entanglement width of a known sequence exceeds $n_{\mathrm{mem}}(S)$.

Unlike Algorithm~\ref{alg:slicing}, the in-register operations cannot be taken
directly from a circuit and must instead be constructed from $S$. At each step, the
next output is expressed in the basis of the qumodes currently stored in the
register. This expression is then extended to a full symplectic operation on the
register, after which the corresponding qumode is emitted. All steps are obtained
from explicit linear-algebra operations.

\subsection{Heisenberg Representation}
\label{app:heisenberg}

The algorithm is formulated in the Heisenberg picture. If the register contains
$m$ memory qumodes, their quadratures are collected in a matrix
$M\in\mathbb{R}^{2m\times 2N}$. Each qumode contributes a
$2\times 2N$ row pair expressing its two quadratures in terms of the
$2N$ input quadratures.

Because the stored qumodes satisfy the canonical commutation relations,
\begin{equation}
  M\Omega M^{\top}=\Omega_m,
  \qquad
  \Omega_m=
  \bigoplus_{j=1}^{m}
  \begin{pmatrix}
    0 & 1\\
    -1 & 0
  \end{pmatrix}.
  \label{eq:frame}
\end{equation}
Injecting an input qumode appends its identity row pair to $M$. An in-register
Gaussian operation acts as $M\mapsto TM$, with $T$ symplectic with respect to
$\Omega_m$, and therefore preserves Eq.~\eqref{eq:frame}. Emission removes one
conjugate row pair, leaving a symplectic frame for the remaining $m-1$ qumodes.

The construction uses two earlier results: Different output qumodes commute by
Eq.~\eqref{eq:symframe}, and all inputs required for $o_k$ have been injected by
step $k$ because $\supp(o_k)\subseteq P_k$.

\subsection{Algorithm}
\label{app:algorithm}

\begin{algorithm}[t]
\caption{Sequential generation of $S$ from its symplectic matrix}
\label{alg:matrix}
\begin{algorithmic}[1]

\Statex \textbf{Definitions:}
\Statex $\mathrm{Mem}$: row-pair representations of the current memory qumodes.
\Statex $\mathrm{Injected}$: input qumodes already injected.
\Statex $M=\mathrm{stack}(\mathrm{Mem})$: register matrix.
\Statex $e_a$: identity row pair of input qumode $a$.
\Statex $\alpha=S[o_k,:]$: row pair of output qumode $o_k$.
\Statex $\mathcal{G}$: constructed operation sequence.

\Statex \rule{\linewidth}{0.4pt}

\Statex \textbf{Input:}
\Statex Symplectic matrix $S\in\Sp(2N,\mathbb{R})$.
\Statex Emission order $o_1,\dots,o_N$.

\Statex \rule{\linewidth}{0.4pt}

\Statex \textbf{Output:}
\Statex Operation sequence $\mathcal{G}$ emitting outputs in the order
$o_1,\dots,o_N$.

\Statex \rule{\linewidth}{0.4pt}

\State $\mathcal{G}\gets[\,]$
\State $\mathrm{Mem}\gets[\,]$
\State $\mathrm{Injected}\gets\emptyset$

\For{$k=1,\dots,N$}
  \For{$a\in\supp(o_k)\setminus\mathrm{Injected}$}
    \State append $e_a$ to $\mathrm{Mem}$
    \State $\mathrm{Injected}\gets\mathrm{Injected}\cup\{a\}$
    \State append $\mathrm{inject}(a)$ to $\mathcal{G}$
  \EndFor

  \State $m\gets|\mathrm{Mem}|$
  \State $M\gets\mathrm{stack}(\mathrm{Mem})$
  \State $\alpha\gets S[o_k,:]$
  \State $A\gets\alpha\Om M^{\top}\Om_m^{-1}$
  \Comment{$\alpha=AM$}
  \State $T\gets\mathrm{Complete}(A,\Om_m)$
  \Comment{$T[1{:}2]=A$}
  \State $M\gets TM$
  \State append $T$ to $\mathcal{G}$
  \State $f_k\gets M[1{:}2,:]$
  \State emit $f_k$ through a $\pi$ full swap
  \State append the emission to $\mathcal{G}$
  \State $\mathrm{Mem}\gets$ row pairs of $M$ after removing the first qumode
\EndFor

\State \Return $\mathcal{G}$

\end{algorithmic}
\end{algorithm}

At step $k$, Algorithm~\ref{alg:matrix} first injects any input qumodes in
$\supp(o_k)$ that are not yet present in the register. It then constructs an
in-register symplectic operation that places the output qumode $o_k$ on the first
memory qumode.

Let $\alpha=S[o_k,:]$ denote the row pair of $o_k$ in the input quadratures. Since
the current memory qumodes are represented by the rows of $M$, a matrix
$A\in\mathbb{R}^{2\times2m}$ satisfying $\alpha=AM$ expresses $o_k$ in the
coordinates of the register. Lemma~\ref{lem:sep} shows that this matrix is unique
and is given by
$A=\alpha\Om M^{\top}\Om_m^{-1}$. The same lemma shows that the two rows of $A$
form a normalized conjugate pair with respect to $\Om_m$.

The subroutine $\mathrm{Complete}(A,\Om_m)$ extends this pair to a full symplectic
matrix. Starting from the pair $(q,p)$ given by the two rows of $A$, each additional
vector $v$ is made symplectically orthogonal to the previously selected pairs using
\begin{equation}
  v\mapsto
  v-\{v,p\}_{\Om_m}q+\{v,q\}_{\Om_m}p.
  \label{eq:gramschmidt}
\end{equation}
The remaining vectors are then paired and normalized until a matrix
$T\in\mathbb{R}^{2m\times2m}$ is obtained with
$T\Om_mT^{\top}=\Om_m$ and $T[1{:}2]=A$.

After all gates in $\mathcal{C}$ have been applied, one memory
qumode carries the complete output $o_k$. This qumode is then
emitted through a full swap. The in-register operation at step $k$
is therefore obtained by taking the selected gates directly from
the original circuit while preserving their original relative order.
Equivalently, it is the ordered subsequence of the original gate
sequence specified by $\mathcal{C}$. Hence, neither the gate content
nor the gate ordering needs to be resynthesized.

\subsection{Correctness}
\label{app:correctness}

\begin{lemma}[Separability]
\label{lem:sep}
At step $k$, the output row pair $\alpha=S[o_k,:]$ lies in the row space of $M$.
Therefore, there is a unique matrix $A$ such that $\alpha=AM$. Moreover,
$A\Om_mA^{\top}=
\bigl(\begin{smallmatrix}0&1\\-1&0\end{smallmatrix}\bigr)$.
\end{lemma}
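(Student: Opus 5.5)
Let $E_{k-1}\in\mathbb{R}^{2(k-1)\times 2N}$ collect the row pairs of the previously emitted outputs $o_1,\dots,o_{k-1}$ (each equal to its row pair in $S$, as for Theorem~\ref{thm:slicing}), and let $M\in\mathbb{R}^{2m\times2N}$ be the register matrix after the injections at step $k$, so $m=|P_k|-(k-1)$. The plan is to prove the statement by induction on $k$, maintaining the invariant that the stacked matrix $\bigl(\begin{smallmatrix}E_{k-1}\\ M\end{smallmatrix}\bigr)$ is a symplectic frame whose row space is exactly the coordinate subspace $U_{P_k}\subseteq\mathbb{R}^{2N}$ spanned by the identity row pairs $e_a$ with $a\in P_k$. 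The invariant holds initially, since only identity row pairs are present. Injecting a new input $a$ adds $e_a$, which is symplectically orthogonal to all existing rows because they are supported on other inputs. An in-register operation $M\mapsto TM$ with $T$ symplectic preserves both the row space and the frame property. Emission merely moves the first conjugate pair of $M$ into $E$.

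The main step is to show that $\alpha$ lies in the row space of $M$. First, $\supp(o_k)\subseteq P_k$ gives $\alpha\in U_{P_k}$. Second, Eq.~\eqref{eq:symframe} gives $\alpha\Om\beta^{\top}=0$ for every row $\beta$ of $E_{k-1}$. Any vector of $U_{P_k}$ can be written uniquely as $xE_{k-1}+yM$. Pairing with the rows of $E_{k-1}$ and using $E_{k-1}\Om E_{k-1}^{\top}=\Om_{k-1}$ (invertible) and $M\Om E_{k-1}^{\top}=0$ (frame invariant), we get $0=\alpha\Om E_{k-1}^{\top}=x\Om_{k-1}$. Hence $x=0$ and $\alpha=yM$.

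Uniqueness of $A$ follows because the rows of $M$ are linearly independent: they form a symplectic frame, and $M\Om M^{\top}=\Om_m$ is invertible. Right-multiplying $\alpha=AM$ by $\Om M^{\top}$ then gives $\alpha\Om M^{\top}=A\Om_m$, which recovers the stated formula $A=\alpha\Om M^{\top}\Om_m^{-1}$. Finally,
\[
A\Om_mA^{\top}=AM\Om M^{\top}A^{\top}=\alpha\Om\alpha^{\top}=\bigl(\begin{smallmatrix}0&1\\-1&0\end{smallmatrix}\bigr),
\]
where the last equality is the diagonal case $i=j$ of Eq.~\eqref{eq:symframe}.

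The main obstacle is keeping the invariant that emitted rows and register rows together span exactly $U_{P_k}$ and are mutually symplectically orthogonal. This needs the inductive fact that $\mathrm{Complete}$ yields a genuinely symplectic $T$ whose first pair is $A$. Then the emitted pair $f_k=AM=\alpha$ is correct, and the remaining register rows are orthogonal to it, so the invariant survives into step $k+1$.
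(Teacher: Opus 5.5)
Your proof is correct and follows the same route as the paper. Both arguments decompose the injected phase space into the emitted output pairs plus the register rows, use Eq.~\eqref{eq:symframe} to kill the component of $\alpha$ along the emitted pairs, and then read off $A=\alpha\Om M^{\top}\Om_m^{-1}$ and $A\Om_mA^{\top}=\alpha\Om\alpha^{\top}$ from the frame relation. Your explicit inductive invariant ($E_{k-1}$ and $M$ jointly form a symplectic frame spanning exactly $U_{P_k}$, with $M\Om E_{k-1}^{\top}=0$) makes rigorous what the paper asserts only in words: that the register qumodes commute with the emitted outputs, and that the emitted pairs are the rows of $S$.
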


\begin{proof}
Since $\supp(o_k)\subseteq P_k$ by
Eqs.~\eqref{eq:support} and~\eqref{eq:Pk}, the row pair
$\alpha=S[o_k,:]$ depends only on inputs that have already been injected.

The phase space of the injected inputs is spanned by the previously emitted output
qumodes and the qumodes remaining in the register. Each emission removes one output
row pair, while the remaining register qumodes commute with that output. By
Eq.~\eqref{eq:symframe}, $o_k$ also commutes with every previously emitted output.
Hence, $\alpha$ has no component along the removed output pairs and must lie in the
row space of $M$.

Because $M$ has full row rank, the solution to $\alpha=AM$ is unique. Using
Eq.~\eqref{eq:frame},
$A=\alpha\Om M^{\top}\Om_m^{-1}$. Finally,
$A\Om_mA^{\top}=(AM)\Om(AM)^{\top}
=\alpha\Om\alpha^{\top}
=\bigl(\begin{smallmatrix}0&1\\-1&0\end{smallmatrix}\bigr)$
by Eq.~\eqref{eq:symframe}. Thus, the rows of $A$ form a normalized conjugate pair,
and the symplectic completion is well defined.
\end{proof}

\begin{lemma}[Clean emission]
\label{lem:clean}
After the update $M\gets TM$, the first memory qumode carries the output $o_k$.
Emitting this qumode through a full swap preserves the information already stored
in the register for all later outputs.
\end{lemma}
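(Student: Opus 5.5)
The argument works entirely in the Heisenberg representation of App.~\ref{app:heisenberg}. After the update the register matrix is $TM$. Its first row pair is $T[1{:}2]M=AM$. By the defining property $T[1{:}2]=A$ of $\mathrm{Complete}(A,\Om_m)$ and by Lemma~\ref{lem:sep}, $AM=\alpha=S[o_k,:]$. So the first memory qumode carries exactly the quadratures of $o_k$ as functions of the inputs, which is the first claim.

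For the second claim, I will show that the remaining rows $M'=(TM)[3{:}2m,:]$ form a symplectic frame that is symplectically orthogonal to $\alpha$.
\begin{enumerate}
\item Since $T\Om_mT^\top=\Om_m$ and $M\Om M^\top=\Om_m$, we get $(TM)\Om(TM)^\top=\Om_m$. Hence the first row pair is a conjugate pair, the remaining $m-1$ row pairs form a canonical frame, and the cross-block between them vanishes. In particular every remaining memory quadrature commutes with $o_k$.
\item The emitted traveling qumode receives the state of memory qumode 1 through the $\pi$ full swap. Because the swap acts only on memory qumode 1 and a vacuum traveling qumode, it leaves $M'$ untouched. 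Memory qumode 1 is left in vacuum, so it is available for reuse.
\end{enumerate}

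It remains to show that no information needed for later outputs is lost. Take any later output $o_j$ with $j>k$. By Eq.~\eqref{eq:symframe}, its row pair $S[o_j,:]$ commutes with $\alpha$ and with all earlier emitted outputs. At step $j$, Lemma~\ref{lem:sep} applied inductively guarantees that $S[o_j,:]$ lies in the row space of the register matrix at that time. To close the induction, I will check that the row space of $M'$ equals the symplectic complement of $\mathrm{span}(\alpha)$ inside $\mathrm{rowspace}(M)$. This holds because $TM$ is a symplectic basis of $\mathrm{rowspace}(M)$ whose first pair is $\alpha$. Consequently, any vector in $\mathrm{rowspace}(M)$ that commutes with $\alpha$, and later vectors augmented by newly injected inputs, still lie in $\mathrm{rowspace}(M')\oplus\mathrm{span}\{e_a\}$. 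Nothing needed for later outputs is therefore discarded.

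The main obstacle is making this last statement precise. Before the emission, the injected phase space decomposes as the emitted pairs, the pair $\alpha$, and $\mathrm{rowspace}(M')$, a direct sum of mutually symplectically orthogonal symplectic subspaces. A later row that is symplectically orthogonal to $\alpha$ and to the earlier emissions must then have zero component along those pairs. I plan to argue this by pairing with $\Om$ against $\alpha$ and the emitted rows, using the fact that this pairing is nondegenerate on each symplectic summand.
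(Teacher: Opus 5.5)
Your proposal is correct and follows essentially the same route as the paper. In both, the first row pair of $TM$ is $AM=\alpha$; the remaining rows of $TM$ span the symplectic complement of $\alpha$ within the register; and a later output is orthogonal to $\alpha$, with its pairing against $\alpha$ seeing only its injected-input part because $\alpha$ is supported on $P_k$, so its register component survives removal of the first qumode. Your orthogonal decomposition into emitted pairs, $\alpha$, $\mathrm{rowspace}(M')$, and the uninjected coordinates simply makes explicit the paper's split of $o_{k'}$ into injected and not-yet-injected parts.
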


\begin{proof}
By construction, $T[1{:}2]=A$. Therefore, the first row pair of $TM$ is
$AM=\alpha=S[o_k,:]$, so the first memory qumode carries the complete output
$o_k$.

Since $T$ is invertible, the update $M\mapsto TM$ changes only the symplectic basis
of the register and leaves its row space unchanged. Consider a later output
$o_{k'}$ with $k'>k$. By Eq.~\eqref{eq:symframe}, $o_{k'}$ is symplectically
orthogonal to $o_k$. The part of $o_{k'}$ supported on inputs not yet injected is
also symplectically orthogonal to $o_k$, because $o_k$ is supported only on inputs
already present in the register. It follows that the component of $o_{k'}$ currently
stored in the register is symplectically orthogonal to the first row pair of $TM$.

The remaining rows of $TM$ span the symplectic complement of this first row pair
within the register. Removing the first memory qumode therefore preserves all
information currently stored for the later emissions. The full swap transfers
$o_k$ to the vacuum traveling qumode and completes the emission.
\end{proof}

\begin{theorem}[Faithful realization at the minimum]
\label{thm:matrix}
Algorithm~\ref{alg:matrix} emits $f_k=o_k$ at every step and therefore realizes
$S$ faithfully. Each traveling qumode interacts with the register only once through
a full swap. The peak register occupation is
$n_{\mathrm{mem}}(S)=\max_k\tfrac12\rank S[F_k;P_k]$.
\end{theorem}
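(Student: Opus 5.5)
The plan is an induction on the step index $k$, maintaining two invariants for the register matrix $M$ just before the symplectic update at step $k$. First, $M$ satisfies Eq.~\eqref{eq:frame} with $m=|P_k|-(k-1)$ qumodes. Second, the row space of $M$ together with the previously emitted rows $f_1,\dots,f_{k-1}$ spans exactly the coordinate subspace of the inputs in $P_k$, with each $f_j=S[o_j,:]$.

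For the base case, the register consists of identity row pairs $e_a$ for $a\in\supp(o_1)=P_1$, so both invariants are immediate. For the inductive step, injecting the inputs in $\supp(o_k)\setminus P_{k-1}$ appends identity pairs. These are symplectically orthogonal to the existing rows because those rows are supported on $P_{k-1}$. Hence Eq.~\eqref{eq:frame} persists, and the spanned space becomes the coordinate space of $P_k$.

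Lemma~\ref{lem:sep} then gives a unique $A$ with $\alpha=AM$ whose rows form a normalized conjugate pair. The Gram--Schmidt completion in Eq.~\eqref{eq:gramschmidt} yields a $T$ that is symplectic with respect to $\Om_m$. By Lemma~\ref{lem:clean}, the first row pair of $TM$ equals $S[o_k,:]$, so $f_k=o_k$. The remaining rows still satisfy Eq.~\eqref{eq:frame}, and together with $f_1,\dots,f_k$ they span the same space. This closes the induction.

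Faithfulness then follows. Every traveling qumode receives precisely the row pair $S[o_k,:]$ through one full swap, and later register operations never act on emitted qumodes. The joint input--output map is therefore $S$, with one caveat. I would remark that the vacuum traveling modes end up in the register slots and are discarded, but since the register after the final step is empty, the emitted rows alone constitute all $N$ output rows of $S$.

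For the occupation count, the register just before the $k$-th emission holds $|P_k|-(k-1)$ qumodes, and this is the peak within step $k$ because injections precede the emission. Corollary~\ref{cor:count} identifies this count with $\tfrac12\rank S[F_k;P_k]$. Maximizing over $k$ gives exactly $n_{\mathrm{mem}}(S)$, and the lower bound in Eq.~\eqref{eq:lb} shows this cannot be improved.

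The main obstacle is the dimension bookkeeping in the second invariant. One must justify that the rows of $M$ are linearly independent, which follows from Eq.~\eqref{eq:frame} since $M\Om M^\top$ is invertible. One must also show that the emitted rows and the register rows together are independent and fill the coordinate space of $P_k$. I would handle this by noting that the full collection $\{f_j\}\cup\mathrm{rows}(M)$ satisfies the canonical relations, which forces independence, and that it has $2|P_k|$ elements lying in a $2|P_k|$-dimensional space.
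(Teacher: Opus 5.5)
Your proposal is correct and follows the same route as the paper. The emitted row pairs are certified by Lemma~\ref{lem:sep} and Lemma~\ref{lem:clean}, and the occupation just before the $k$-th emission is counted as $|P_k|-(k-1)$ and identified with $\tfrac12\rank S[F_k;P_k]$ through Corollary~\ref{cor:count}. Your explicit inductive invariant, that $f_1,\dots,f_{k-1}$ together with the rows of $M$ form a canonical basis of the coordinate space of $P_k$, makes rigorous the spanning claim that the paper's proof of Lemma~\ref{lem:sep} asserts without separate justification.
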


\begin{proof}
Lemmas~\ref{lem:sep} and~\ref{lem:clean} imply that the $k$-th traveling qumode
carries the output row pair $o_k$. After all $N$ emissions, the traveling qumodes
therefore carry all output row pairs of $S$, so the resulting transformation is
exactly $S$.

The in-register symplectic operations do not change the number of qumodes. The
register size increases only when an input is injected and decreases only when an
output is emitted. Immediately before the $k$-th emission, the injected set is
exactly $P_k$, and $k-1$ outputs have already been emitted. The register therefore
contains $|P_k|-(k-1)$ memory qumodes. By Corollary~\ref{cor:count}, this equals
$\tfrac12\rank S[F_k;P_k]$. Taking the maximum over $k$ gives
$n_{\mathrm{mem}}(S)$, so Algorithm~\ref{alg:matrix} achieves the minimum.
\end{proof}

\subsection{Complexity and Compilation}
\label{app:compilation}

At step $k$, the matrix
$A=\alpha\Om M^{\top}\Om_m^{-1}$ is computed in $O(Nm)$ time. The
symplectic completion and the update $M\gets TM$ require $O(m^3)$ and
$O(m^2N)$ time, respectively. Since
$m\le\nmem(S)\le N$, each step costs at most $O(N^3)$, and the full
algorithm runs in $O(N^4)$ time.

Algorithm~\ref{alg:matrix} returns each in-register operation as a
symplectic matrix $T_k$ acting on at most $\nmem(S)$ memory qumodes.
This matches the model of Sec.~\ref{sec:model}, where arbitrary Gaussian
operations within the register are assumed to be available. To implement
$T_k$ using elementary gates, two choices must be considered: the
symplectic completion used to construct $T_k$, and the decomposition of
the resulting matrix into hardware operations.

The protocol fixes only the first row pair of $T_k$,
$T_k[1{:}2]=A$, which places the output $o_k$ on the first memory
qumode. The remaining rows may be chosen in any way that makes $T_k$
symplectic. Different choices produce the same emitted output and use
the same number of memory qumodes, but may lead to different elementary
gate counts. The subroutine $\mathrm{Complete}$ selects one such
completion using symplectic Gram--Schmidt orthogonalization against the
standard basis. This construction generally couples all qumodes
currently held in the register. When a gate sequence is available,
Algorithm~\ref{alg:slicing} avoids this choice by taking the
in-register operations directly from the given sequence.

For a fixed $T_k$, the Bloch--Messiah decomposition gives
\begin{equation}
  T_k
  =
  O_{\mathrm{out}}
  \left[
    \bigoplus_j \mathrm{Sq}(r_j)
  \right]
  O_{\mathrm{in}},
  \label{eq:bm}
\end{equation}
where $O_{\mathrm{in}}$ and $O_{\mathrm{out}}$ are passive
interferometers and $\mathrm{Sq}(r_j)$ denotes single-qumode squeezing.
The passive interferometers can be decomposed into beam-splitter
networks using either the triangular~\cite{Reck1994} or
rectangular~\cite{Clements2016} construction. Computing the
Bloch--Messiah decomposition costs $O(m^3)$ time. The resulting circuit
contains $m(m-1)$ beam splitters and $m$ single-qumode squeezers, giving
a total elementary-gate count of $O(N\nmem^2)$ over all emission steps.

The elementary-gate count depends on the chosen symplectic completion.
Further reductions may also be possible by commuting, merging, or
canceling gates across successive steps, although methods developed
for qubit circuits~\cite{Maslov2008,Nam2018} do not directly apply to
beam splitters and single-qumode squeezers. Determining the minimum
elementary-gate count from $S$ alone remains open. This freedom affects
only the compilation cost, while the minimum number of memory qumodes remains
fixed by Eq.~\eqref{eq:ncav}.

\section{Benchmark Comparison with Coloring-Based Emission Orders}
\label{app:benchmark}

This appendix compares Algorithm~\ref{alg:greedy} with emission-order
heuristics inspired by graph-coloring-based register
allocation~\cite{Chaitin1982}. The comparison uses a benchmark of 450
random Gaussian transformations on $N=12$ qumodes, generated from a fixed seed
and divided into three families of 150 instances each. In the sparse
random family, the number of gates is drawn uniformly from the integers
in $[N,2N]$, and each gate acts on a pair of distinct qumodes drawn
uniformly at random from all pairs, with repeated pairs permitted. In
the random tree family, the qumodes are placed in a uniformly random
order, and each qumode after the first is coupled to one uniformly
chosen earlier qumode, giving $N-1$ gates that form a tree. The
application order of these gates is then randomized. In the brickwork
family, the qumodes are arranged on a line and the gates form
alternating layers of nearest-neighbor pairs, shifted by one qumode
between consecutive layers. The number of layers is one, two, or three
with probabilities $0.1$, $0.45$, and $0.45$, respectively, and each
potential gate is included independently with probability $0.8$.
Every gate is a generic two-mode Gaussian gate, and the output supports
are obtained by propagating the supports through the gate list.

For every instance, the exact minimum memory cost over all emission
orders is certified by the Bellman--Held--Karp dynamic
program~\cite{Bellman1962,HeldKarp1962}. The system size $N=12$ is
chosen so that this certification remains feasible for all 450
instances.

The coloring-based heuristics operate on a conflict graph whose vertices
are the outputs. The next output to emit is the remaining vertex of
minimum degree, or minimum weighted degree in the weighted variants.
The degrees are recomputed after each selection, and ties are broken by
output label. Two graph constructions are considered. In the
support-overlap graph, two outputs are adjacent when their supports
intersect, and the weighted variant assigns the edge between $o_i$ and
$o_j$ the weight
$|\mathrm{supp}(o_i)\cap\mathrm{supp}(o_j)|$. In the gate graph,
outputs $o_i$ and $o_j$ are adjacent when at least one gate in the
circuit couples qumodes $i$ and $j$, and the weighted variant assigns
to each edge a weight equal to the number of such gates.

Table~\ref{tab:coloring-benchmark} reports, for each heuristic, the
number of instances on which the resulting memory cost exceeds the
certified optimum by at most a given number of qumodes.
Algorithm~\ref{alg:greedy} attains the optimum on 313 of the 450
instances, uses at most one additional memory qumode on 424 instances,
and never exceeds the optimum by more than three. The best-performing
coloring variant, the weighted support-overlap graph, attains the
optimum on 127 instances and exceeds it by up to four. The unweighted
gate graph attains the optimum on 150 instances, more than either
support-overlap variant, but its excess reaches seven qumodes, and 36 of
its orders exceed the optimum by five or more. The advantage of
Algorithm~\ref{alg:greedy} persists within each family, i.e., it is optimal on
108 of 150 sparse instances, 57 of 150 tree instances, and 148 of 150
brickwork instances, in each case more than any coloring variant.

The comparison indicates why the greedy rule outperforms the
coloring-based orders. The degree of a vertex in a conflict graph is
only an indirect proxy for the memory pressure associated with an
output. By Corollary~\ref{cor:count}, the register occupation is
determined exactly by the number of injected inputs, and
Algorithm~\ref{alg:greedy} minimizes this quantity directly at every
step. The coloring-based orders optimize the proxy rather than the
occupation itself, which leads to heavier tails in the excess
distribution.

\begin{table}[t]
\caption{Cumulative excess over the certified optimum on the
450-instance benchmark. The column labeled $x$ gives the number of
instances on which the heuristic uses at most $x$ memory qumodes more
than the exact minimum. The last column gives the largest excess.}
\label{tab:coloring-benchmark}
\begin{ruledtabular}
\begin{tabular}{lcccccc}
 & \multicolumn{5}{c}{Excess $\le x$} & \\
\cline{2-6}
Method & $0$ & $1$ & $2$ & $3$ & $4$ & Worst \\
\colrule
Algorithm~\ref{alg:greedy}       & 313 & 424 & 447 & 450 & 450 & $+3$ \\
Support overlap, weighted        & 127 & 281 & 416 & 447 & 450 & $+4$ \\
Support overlap, unweighted      &  70 & 172 & 285 & 390 & 434 & $+7$ \\
Gate graph, unweighted           & 150 & 196 & 267 & 350 & 414 & $+7$ \\
Gate graph, weighted             & 131 & 176 & 266 & 350 & 415 & $+7$ \\
\end{tabular}
\end{ruledtabular}
\end{table}

\bibliographystyle{apsrev4-2}
\bibliography{References}

@article{Kimble2008,
  title = {The quantum internet},
  author = {Kimble, H. J.},
  journal = {Nature},
  volume = {453},
  pages = {1023--1030},
  year = {2008},
  doi = {10.1038/nature07127}
}

@article{4rf7-9tfx,
  title = {Hybrid Oscillator-Qubit Quantum Processors: Instruction Set Architectures, Abstract Machine Models, and Applications},
  author = {Liu, Yuan and Singh, Shraddha and Smith, Kevin C. and Crane, Eleanor and Martyn, John M. and Eickbusch, Alec and Schuckert, Alexander and Li, Richard D. and Sinanan-Singh, Jasmine and Soley, Micheline B. and Tsunoda, Takahiro and Chuang, Isaac L. and Wiebe, Nathan and Girvin, Steven M.},
  journal = {PRX Quantum},
  volume = {7},
  issue = {1},
  pages = {010201},
  numpages = {166},
  year = {2026},
  month = {Jan},
  publisher = {American Physical Society},
  doi = {10.1103/4rf7-9tfx},
  url = {https://link.aps.org/doi/10.1103/4rf7-9tfx}
}

@article{Wehner2018,
  title = {Quantum internet: A vision for the road ahead},
  author = {Wehner, Stephanie and Elkouss, David and Hanson, Ronald},
  journal = {Science},
  volume = {362},
  pages = {eaam9288},
  year = {2018},
  doi = {10.1126/science.aam9288}
}

@article{Monroe2014,
  title = {Large-scale modular quantum-computer architecture with atomic memory and photonic interconnects},
  author = {Monroe, C. and Raussendorf, R. and Ruthven, A. and Brown, K. R. and Maunz, P. and Duan, L.-M. and Kim, J.},
  journal = {Phys. Rev. A},
  volume = {89},
  pages = {022317},
  year = {2014},
  doi = {10.1103/PhysRevA.89.022317}
}

@article{Nickerson2014,
  title = {Freely scalable quantum technologies using cells of 5-to-50 qubits with very lossy and noisy photonic links},
  author = {Nickerson, Naomi H. and Fitzsimons, Joseph F. and Benjamin, Simon C.},
  journal = {Phys. Rev. X},
  volume = {4},
  pages = {041041},
  year = {2014},
  doi = {10.1103/PhysRevX.4.041041}
}

@article{Reagor2016,
  title = {Quantum memory with millisecond coherence in circuit QED},
  author = {Reagor, Matthew and Pfaff, Wolfgang and Axline, Christopher
            and Heeres, Reinier W. and Ofek, Nissim and Sliwa, Katrina
            and Holland, Eric and Wang, Chen and Blumoff, Jacob
            and Chou, Kevin and Hatridge, Michael J. and Frunzio, Luigi
            and Devoret, M. H. and Jiang, Liang and Schoelkopf, R. J.},
  journal = {Phys. Rev. B},
  volume = {94},
  pages = {014506},
  year = {2016},
  doi = {10.1103/PhysRevB.94.014506}
}

@article{Pfaff2017,
  title = {Controlled release of multiphoton quantum states from a microwave cavity memory},
  author = {Pfaff, W. and Axline, C. J. and Burkhart, L. D. and Vool, U. and Reinhold, P. and Frunzio, L. and Jiang, L. and Devoret, M. H. and Schoelkopf, R. J.},
  journal = {Nat. Phys.},
  volume = {13},
  pages = {882--887},
  year = {2017},
  doi = {10.1038/nphys4143}
}

@article{Axline2018,
  title = {On-demand quantum state transfer and entanglement between remote microwave cavity memories},
  author = {Axline, C. J. and Burkhart, L. D. and Pfaff, W. and Zhang, M. and Chou, K. and Campagne-Ibarcq, P. and Reinhold, P. and Frunzio, L. and Girvin, S. M. and Jiang, L. and Devoret, M. H. and Schoelkopf, R. J.},
  journal = {Nat. Phys.},
  volume = {14},
  pages = {705--710},
  year = {2018},
  doi = {10.1038/s41567-018-0115-y}
}

@article{BraunsteinVanLoock2005,
  title = {Quantum information with continuous variables},
  author = {Braunstein, Samuel L. and van Loock, Peter},
  journal = {Rev. Mod. Phys.},
  volume = {77},
  pages = {513--577},
  year = {2005},
  doi = {10.1103/RevModPhys.77.513}
}

@article{Weedbrook2012,
  title = {Gaussian quantum information},
  author = {Weedbrook, Christian and Pirandola, Stefano and Garc{\'i}a-Patr{\'o}n, Ra{\'u}l and Cerf, Nicolas J. and Ralph, Timothy C. and Shapiro, Jeffrey H. and Lloyd, Seth},
  journal = {Rev. Mod. Phys.},
  volume = {84},
  pages = {621--669},
  year = {2012},
  doi = {10.1103/RevModPhys.84.621}
}

@article{Braunstein1998,
  title = {Error correction for continuous quantum variables},
  author = {Braunstein, Samuel L.},
  journal = {Phys. Rev. Lett.},
  volume = {80},
  pages = {4084--4087},
  year = {1998},
  doi = {10.1103/PhysRevLett.80.4084}
}

@article{LloydSlotine1998,
  title = {Analog quantum error correction},
  author = {Lloyd, Seth and Slotine, Jean-Jacques E.},
  journal = {Phys. Rev. Lett.},
  volume = {80},
  pages = {4088--4091},
  year = {1998},
  doi = {10.1103/PhysRevLett.80.4088}
}

@article{Guo2026,
  title = {Concatenated dual displacement code for continuous-variable quantum error correction},
  author = {Guo, Fucheng and Mueller, Frank and Liu, Yuan},
  journal = {Phys. Rev. Research},
  volume = {8},
  pages = {023158},
  year = {2026},
  doi = {10.1103/d1zs-cy4t}
}

@article{Gao2018,
  title = {Programmable interference between two microwave quantum memories},
  author = {Gao, Yvonne Y. and Lester, Brian J. and Zhang, Yaxing and Wang, Chen and Rosenblum, Serge and Frunzio, Luigi and Jiang, Liang and Girvin, S. M. and Schoelkopf, R. J.},
  journal = {Phys. Rev. X},
  volume = {8},
  pages = {021073},
  year = {2018},
  doi = {10.1103/PhysRevX.8.021073}
}

@article{Chapman2023,
  title = {High-on-off-ratio beam-splitter interaction for gates on bosonically encoded qubits},
  author = {Chapman, Benjamin J. and de Graaf, Stijn J. and Xue, Sophia H. and Zhang, Yaxing and Teoh, James and Curtis, Jacob C. and Tsunoda, Takahiro and Eickbusch, Alec and Read, Alexander P. and Koottandavida, Akshay and Mundhada, Shantanu O. and Frunzio, Luigi and Devoret, M. H. and Girvin, S. M. and Schoelkopf, R. J.},
  journal = {PRX Quantum},
  volume = {4},
  pages = {020355},
  year = {2023},
  doi = {10.1103/PRXQuantum.4.020355}
}

@article{Schon2005,
  title = {Sequential generation of entangled multiqubit states},
  author = {Sch{\"o}n, C. and Solano, E. and Verstraete, F. and Cirac, J. I. and Wolf, M. M.},
  journal = {Phys. Rev. Lett.},
  volume = {95},
  pages = {110503},
  year = {2005},
  doi = {10.1103/PhysRevLett.95.110503}
}

@article{Schon2007,
  title = {Sequential generation of matrix-product states in cavity QED},
  author = {Sch{\"o}n, C. and Hammerer, K. and Wolf, M. M. and Cirac, J. I. and Solano, E.},
  journal = {Phys. Rev. A},
  volume = {75},
  pages = {032311},
  year = {2007},
  doi = {10.1103/PhysRevA.75.032311}
}

@article{LindnerRudolph2009,
  title = {Proposal for pulsed on-demand sources of photonic cluster state strings},
  author = {Lindner, Netanel H. and Rudolph, Terry},
  journal = {Phys. Rev. Lett.},
  volume = {103},
  pages = {113602},
  year = {2009},
  doi = {10.1103/PhysRevLett.103.113602}
}

@article{Li2022,
  title = {Photonic resource state generation from a minimal number of quantum emitters},
  author = {Li, Bikun and Economou, Sophia E. and Barnes, Edwin},
  journal = {npj Quantum Inf.},
  volume = {8},
  pages = {11},
  year = {2022},
  doi = {10.1038/s41534-022-00522-6}
}

@article{GKP2001,
  title = {Encoding a qubit in an oscillator},
  author = {Gottesman, Daniel and Kitaev, Alexei and Preskill, John},
  journal = {Phys. Rev. A},
  volume = {64},
  pages = {012310},
  year = {2001},
  doi = {10.1103/PhysRevA.64.012310}
}

@article{Menicucci2006,
  title = {Universal quantum computation with continuous-variable cluster states},
  author = {Menicucci, Nicolas C. and van Loock, Peter and Gu, Mile and Weedbrook, Christian and Ralph, Timothy C. and Nielsen, Michael A.},
  journal = {Phys. Rev. Lett.},
  volume = {97},
  pages = {110501},
  year = {2006},
  doi = {10.1103/PhysRevLett.97.110501}
}

@article{Menicucci2014,
  title = {Fault-tolerant measurement-based quantum computing with continuous-variable cluster states},
  author = {Menicucci, Nicolas C.},
  journal = {Phys. Rev. Lett.},
  volume = {112},
  pages = {120504},
  year = {2014},
  doi = {10.1103/PhysRevLett.112.120504}
}

@article{Noh2020,
  title = {Encoding an oscillator into many oscillators},
  author = {Noh, Kyungjoo and Girvin, S. M. and Jiang, Liang},
  journal = {Phys. Rev. Lett.},
  volume = {125},
  pages = {080503},
  year = {2020},
  doi = {10.1103/PhysRevLett.125.080503}
}

@article{Heeres2017,
  title = {Implementing a universal gate set on a logical qubit encoded in an oscillator},
  author = {Heeres, Reinier W. and Reinhold, Philip and Ofek, Nissim and Frunzio, Luigi and Jiang, Liang and Devoret, Michel H. and Schoelkopf, Robert J.},
  journal = {Nat. Commun.},
  volume = {8},
  pages = {94},
  year = {2017},
  doi = {10.1038/s41467-017-00045-1}
}

@article{Pichler2017,
  title = {Universal photonic quantum computation via time-delayed feedback},
  author = {Pichler, Hannes and Choi, Soonwon and Zoller, Peter and Lukin, Mikhail D.},
  journal = {Proc. Natl. Acad. Sci. U.S.A.},
  volume = {114},
  pages = {11362--11367},
  year = {2017},
  doi = {10.1073/pnas.1711003114}
}

@article{Adesso2006,
  title = {Entanglement in {G}aussian matrix-product states},
  author = {Adesso, Gerardo and Ericsson, Marie},
  journal = {Phys. Rev. A},
  volume = {74},
  pages = {030305},
  year = {2006},
  doi = {10.1103/PhysRevA.74.030305}
}

@article{Botero2003,
  title = {Modewise entanglement of {G}aussian states},
  author = {Botero, Alonso and Reznik, Benni},
  journal = {Phys. Rev. A},
  volume = {67},
  pages = {052311},
  year = {2003},
  doi = {10.1103/PhysRevA.67.052311}
}

@article{Eisert2010,
  title = {Colloquium: Area laws for the entanglement entropy},
  author = {Eisert, J. and Cramer, M. and Plenio, M. B.},
  journal = {Rev. Mod. Phys.},
  volume = {82},
  pages = {277--306},
  year = {2010},
  doi = {10.1103/RevModPhys.82.277}
}

@article{Reck1994,
  title = {Experimental realization of any discrete unitary operator},
  author = {Reck, Michael and Zeilinger, Anton and Bernstein, Herbert J. and Bertani, Philip},
  journal = {Phys. Rev. Lett.},
  volume = {73},
  pages = {58--61},
  year = {1994},
  doi = {10.1103/PhysRevLett.73.58}
}

@article{Clements2016,
  title = {Optimal design for universal multiport interferometers},
  author = {Clements, William R. and Humphreys, Peter C. and Metcalf, Benjamin J. and Kolthammer, W. Steven and Walmsley, Ian A.},
  journal = {Optica},
  volume = {3},
  pages = {1460--1465},
  year = {2016},
  doi = {10.1364/OPTICA.3.001460}
}

@article{Maslov2008,
  title = {Quantum circuit simplification and level compaction},
  author = {Maslov, Dmitri and Dueck, Gerhard W. and Miller, D. Michael and Negrevergne, Camille},
  journal = {IEEE Trans. Comput.-Aided Des. Integr. Circuits Syst.},
  volume = {27},
  pages = {436--444},
  year = {2008},
  doi = {10.1109/TCAD.2007.911334}
}

@article{Nam2018,
  title = {Automated optimization of large quantum circuits with continuous parameters},
  author = {Nam, Yunseong and Ross, Neil J. and Su, Yuan and Childs, Andrew M. and Maslov, Dmitri},
  journal = {npj Quantum Inf.},
  volume = {4},
  pages = {23},
  year = {2018},
  doi = {10.1038/s41534-018-0072-4}
}

@article{Wu2023,
  title = {Optimal encoding of oscillators into more oscillators},
  author = {Wu, Jing and Brady, Anthony J. and Zhuang, Quntao},
  journal = {Quantum},
  volume = {7},
  pages = {1082},
  year = {2023},
  publisher = {Verein zur F{\"o}rderung des Open Access Publizierens in den Quantenwissenschaften},
  doi = {10.22331/q-2023-08-16-1082}
}

@article{Brady2024,
  title = {Safeguarding Oscillators and Qudits with Distributed Two-Mode Squeezing},
  author = {Brady, Anthony J. and Wu, Jing and Zhuang, Quntao},
  journal = {Quantum},
  volume = {8},
  pages = {1478},
  year = {2024},
  publisher = {Verein zur F{\"o}rderung des Open Access Publizierens in den Quantenwissenschaften},
  doi = {10.22331/q-2024-09-19-1478}
}

@article{Shor1995,
  title = {Scheme for reducing decoherence in quantum computer memory},
  author = {Shor, Peter W.},
  journal = {Phys. Rev. A},
  volume = {52},
  pages = {R2493},
  year = {1995},
  doi = {10.1103/PhysRevA.52.R2493}
}

@article{Bellman1962,
  title = {Dynamic programming treatment of the travelling salesman problem},
  author = {Bellman, Richard},
  journal = {J. Assoc. Comput. Mach.},
  volume = {9},
  pages = {61},
  year = {1962},
  doi = {10.1145/321105.321111}
}

@article{HeldKarp1962,
  title = {A dynamic programming approach to sequencing problems},
  author = {Held, Michael and Karp, Richard M.},
  journal = {J. Soc. Indust. Appl. Math.},
  volume = {10},
  pages = {196},
  year = {1962},
  doi = {10.1137/0110015}
}

@article{Asavanant2019,
  title = {Generation of time-domain-multiplexed two-dimensional cluster state},
  author = {Asavanant, Warit and Shiozawa, Yu and Yokoyama, Shota and Charoensombutamon, Baramee and Emura, Hiroki and Alexander, Rafael N. and Takeda, Shuntaro and Yoshikawa, Jun-ichi and Menicucci, Nicolas C. and Yonezawa, Hidehiro and Furusawa, Akira},
  journal = {Science},
  volume = {366},
  pages = {373},
  year = {2019},
  doi = {10.1126/science.aay2645}
}

@article{Larsen2019,
  title = {Deterministic generation of a two-dimensional cluster state},
  author = {Larsen, Mikkel V. and Guo, Xueshi and Breum, Casper R. and Neergaard-Nielsen, Jonas S. and Andersen, Ulrik L.},
  journal = {Science},
  volume = {366},
  pages = {369},
  year = {2019},
  doi = {10.1126/science.aay4354}
}

@article{Yokoyama2013,
  title = {Ultra-large-scale continuous-variable cluster states multiplexed in the time domain},
  author = {Yokoyama, Shota and Ukai, Ryuji and Armstrong, Seiji C. and Sornphiphatphong, Chanond and Kaji, Toshiyuki and Suzuki, Shigenari and Yoshikawa, Jun-ichi and Yonezawa, Hidehiro and Menicucci, Nicolas C. and Furusawa, Akira},
  journal = {Nat. Photonics},
  volume = {7},
  pages = {982},
  year = {2013},
  doi = {10.1038/nphoton.2013.287}
}

@article{Yoshikawa2016,
  title = {Invited Article: Generation of one-million-mode continuous-variable cluster state by unlimited time-domain multiplexing},
  author = {Yoshikawa, Jun-ichi and Yokoyama, Shota and Kaji, Toshiyuki and Sornphiphatphong, Chanond and Shiozawa, Yu and Makino, Kenzo and Furusawa, Akira},
  journal = {APL Photonics},
  volume = {1},
  pages = {060801},
  year = {2016},
  doi = {10.1063/1.4962732}
}

@article{Chakhmakhchyan2018,
  title = {Simulating arbitrary Gaussian circuits with linear optics},
  author = {Chakhmakhchyan, Levon and Cerf, Nicolas J.},
  journal = {Phys. Rev. A},
  volume = {98},
  pages = {062314},
  year = {2018},
  doi = {10.1103/PhysRevA.98.062314}
}

@article{Yin2013,
  title = {Catch and release of microwave photon states},
  author = {Yin, Yi and Chen, Yu and Sank, Daniel and O'Malley, P. J. J.
            and White, T. C. and Barends, R. and Kelly, J. and Lucero, Erik
            and Mariantoni, Matteo and Megrant, A. and Neill, C.
            and Vainsencher, A. and Wenner, J.
            and Korotkov, Alexander N. and Cleland, A. N.
            and Martinis, John M.},
  journal = {Phys. Rev. Lett.},
  volume = {110},
  pages = {107001},
  year = {2013},
  doi = {10.1103/PhysRevLett.110.107001}
}

@article{Pierre2014,
  title = {Storage and on-demand release of microwaves using superconducting resonators with tunable coupling},
  author = {Pierre, Mathieu and Svensson, Ida-Maria
            and Sathyamoorthy, Sankar Raman and Johansson, G{\"o}ran
            and Delsing, Per},
  journal = {Appl. Phys. Lett.},
  volume = {104},
  pages = {232604},
  year = {2014},
  doi = {10.1063/1.4882646}
}

@article{Houck2007,
  title = {Generating single microwave photons in a circuit},
  author = {Houck, A. A. and Schuster, D. I. and Gambetta, J. M.
            and Schreier, J. A. and Johnson, B. R. and Chow, J. M.
            and Frunzio, L. and Majer, J. and Devoret, M. H.
            and Girvin, S. M. and Schoelkopf, R. J.},
  journal = {Nature},
  volume = {449},
  pages = {328--331},
  year = {2007},
  doi = {10.1038/nature06126}
}

@article{Pechal2014,
  title = {Microwave-controlled generation of shaped single photons in circuit quantum electrodynamics},
  author = {Pechal, M. and Huthmacher, L. and Eichler, C.
            and Zeytino{\u{g}}lu, S. and Abdumalikov, A. A., Jr.
            and Berger, S. and Wallraff, A. and Filipp, S.},
  journal = {Phys. Rev. X},
  volume = {4},
  pages = {041010},
  year = {2014},
  doi = {10.1103/PhysRevX.4.041010}
}

@article{Zhang2006,
  title = {Continuous-variable Gaussian analog of cluster states},
  author = {Zhang, Jing and Braunstein, Samuel L.},
  journal = {Phys. Rev. A},
  volume = {73},
  pages = {032318},
  year = {2006},
  doi = {10.1103/PhysRevA.73.032318}
}

@article{Chaitin1982,
  title = {Register allocation \& spilling via graph coloring},
  author = {Chaitin, Gregory J.},
  journal = {ACM SIGPLAN Not.},
  volume = {17(6)},
  pages = {98--101},
  year = {1982},
  doi = {10.1145/872726.806984}
}

@article{Menicucci2007,
  title = {Ultracompact generation of continuous-variable cluster states},
  author = {Menicucci, Nicolas C. and Flammia, Steven T. and Zaidi, Hussain
            and Pfister, Olivier},
  journal = {Phys. Rev. A},
  volume = {76},
  pages = {010302(R)},
  year = {2007},
  doi = {10.1103/PhysRevA.76.010302}
}

\end{document}